\documentclass[11pt]{article}

\usepackage[margin=1in]{geometry}
\usepackage{amsmath,amsthm,amssymb}
\usepackage{environ, tabularx}
\usepackage{tikz-cd,graphicx}
\usepackage{float}
\usepackage{indentfirst}
\usepackage{mathtools}
\usepackage{hyperref}
\usepackage{amsmath}
\usepackage{dirtytalk}
\usepackage{pifont}
\usepackage{soul}

\newcommand{\shell}{\textsf{shell}}
\newcommand{\R}{\mathbb{R}}
\newcommand{\Z}{\mathbb{Z}}

\newcommand{\K}{\mathcal{K}}
\newcommand{\M}{\mathcal{M}}

\newcommand{\NP}{\textsf{NP}}
\newcommand{\APX}{\textsf{APX}}

\makeatletter

\usepackage{aliascnt}
\newtheorem{theorem}{Theorem}[section]

\newaliascnt{lemma}{theorem}
\newtheorem{lemma}[lemma]{Lemma}
\aliascntresetthe{lemma}

\newaliascnt{corollary}{theorem}

\aliascntresetthe{corollary}

\DeclareMathOperator{\im}{im}

\author{Glencora Borradiale \and William Maxwell \and Amir Nayyeri}

\title{Minimum bounded chains and minimum homologous chains in embedded simplicial complexes}
\date{}

\begin{document}

\maketitle

\begin{abstract}
We study two optimization problems on simplicial complexes with homology over $\Z_2$, 
the minimum bounded chain problem: given a $d$-dimensional complex $\K$ embedded in $\R^{d+1}$ and a null-homologous $(d-1)$-cycle $C$ in $\K$, find the minimum $d$-chain with boundary $C$, and 
the minimum homologous chain problem: given a $(d+1)$-manifold $\M$ and a $d$-chain $D$ in $\M$, find the minimum $d$-chain homologous to $D$.
We show strong hardness results for both problems even for small values of $d$; $d = 2$ for the former problem, and $d=1$ for the latter problem.
We show that both problems are \APX-hard, and hard to approximate within any constant factor assuming the unique games conjecture.
On the positive side, we show that both problems are fixed-parameter tractable with respect to the size of the optimal solution.
Moreover, we provide an $O(\sqrt{\log \beta_d})$-approximation algorithm for the minimum bounded chain problem where $\beta_d$ is the $d$th Betti number of $\K$.
Finally, we provide an $O(\sqrt{\log n_{d+1}})$-approximation algorithm for the minimum homologous chain problem where $n_{d+1}$ is the number of $d$-simplices in $\M$.
\end{abstract}

\section{Introduction}

Simplicial complexes are best known as a generalization of graphs, but have more structure than other generalizations such as hypergraphs.  Despite the structure, simplicial complexes are sufficiently expressive to make many algorithmic questions computationally intractable.  For example, the generalization of shortest path that we examine in this work is \NP-hard in 2-dimensional simplicial complexes~\cite{Dunfield2011}.  Since planar graphs (1-dimensional simplicial complexes embeddable in $\R^2$) exhibit structure that is algorithmically useful, resulting in more efficient or more accurate algorithms than for general graphs, we ask whether 2-dimensional simplicial complexes that are embeddable in $\R^3$ (and more generally, $d$-complexes emdeddable in $\R^{d+1}$) also have sufficient structure that can be exploited algorithmically.  To this end, we examine the algebraic generalization of the shortest path problem in graphs to simplicial complexes of higher dimension.  This restriction via embedding in Euclidean space would still result in a useful algorithmic tool, given the connection of embedded simplicial complexes to meshes arising from physical systems.

Formally we study the {\em minimum bounded chain problem} which is the algebraic generalization of the shortest path problem in graphs~\cite{Kirsanov2004}.  The goal of the minimum bounded chain problem is to find a subcomplex whose boundary is a given input cycle $C$.  More precisely: Given a $d$-dimensional simplicial complex $\K$ and a null-homologous $(d-1)$-dimensional cycle $C \subset \K$, find a minimum-cost $d$-chain $D \subset \K$ whose boundary $\partial D = C$.  The requirement that the cycle be null-homologous is necessary and sufficient for the existence of a solution and we study the problem in the context of $\Z_2$-homology.\footnote{Formal definitions are presented in Section~\ref{sec:prelim}.} In $\Z_2$-homology, a $d$-chain is a subset of $d$-simplices of the simplicial complex.
We see this as a generalization of the shortest path problem in graphs as follows:  Let $\K$ be a one dimensional simplicial complex (i.e.\ a graph).  A pair of vertices in the same connected component, $s$ and $t$, is a null-homologous $0$-chain and the minimum 1-chain whose boundary is $\{s,t\}$ is the shortest $(s,t)$-path.
Grady has written on why this generalization is useful in the context of 3D graphics~\cite{Grady2010}.  

The minimum bounded chain problem is closely related to the minimum homologous chain problem which asks: given a $d$-chain $D$, find a minimum-cost $d$-chain $X$ such that the symmetric difference of $D$ and $X$ form the boundary of a $(d+1)$-chain. Alternatively, $X$ is the minimum-cost $d$-chain that is homologous to $D$.  Dunfield and Hirani~\cite{Dunfield2011} show the minimum bounded and homologous chain problems are equivalent under additional assumptions.  We study the minimum homologous chain problem for $d$-chains in $(d+1)$-manifolds.

\subsection{Our results}
We present approximation and fixed-parameter tractable algorithms for the minimum bounded chain and the minimum homologous chain problem. In this paper we consider both problems in the context of simplicial homology over $\Z_2$.  We denote by $n_d$ the number of $d$-simplices of the $d$-dimensional simplicial complex $\K$.
Two of our results assume the unique games conjecture. For an overview of the unique games conjecture and its impact on computational topology we refer the reader to the work of Growchow and Tucker-Foltz \cite{UGCcomptop}.

\begin{theorem}
\label{thm:MBC_apx}
There exists an $O(\sqrt{\log \beta_d})$-approximation algorithm for the minimum bounded chain problem for a simplicial complex $\K$ embedded in $\R^{d+1}$, with $d$th Betti number $\beta_d$.
\end{theorem}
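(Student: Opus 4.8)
The plan is to establish a combinatorial duality between the $d$-cycles of $\K$ and the cuts of a graph on the complementary regions of $\K$, and then reduce the minimum bounded chain problem to a known $O(\sqrt{\log n})$-approximable cut-type problem on that (small) graph.

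First I would build the dual graph. Since $\K$ is embedded in $\R^{d+1}$, its $d$-simplices locally separate space, and $\R^{d+1}\setminus|\K|$ has exactly $\beta_d+1$ connected components (the \emph{voids}), by Alexander duality over $\Z_2$ (the reduced $0$th homology of the complement is isomorphic to $H_d(\K;\Z_2)$). Let $G^*$ be the multigraph with one vertex per void and one edge per $d$-simplex $\sigma$, joining the two voids incident to the two local sides of $\sigma$ (a loop when both sides lie in one void). The structural heart of the proof is that, under the bijection between $d$-simplices of $\K$ and edges of $G^*$, the cycle space $Z_d(\K)=\ker\partial_d$ equals the cut space of $G^*$ (so $G^*$ is connected and $\dim Z_d(\K)=\beta_d$). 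I would prove this by a mod-$2$ intersection argument: given a $d$-cycle $z$, I would $2$-color the voids by assigning to each void $V$ the parity of the number of simplices of $z$ crossed by a generic curve from a fixed base void to $V$. This is well defined because a generic closed curve meets the $\Z_2$-cycle $z$ with even multiplicity: as $H_d(\R^{d+1};\Z_2)=0$ we have $z=\partial w$ for some $(d+1)$-chain $w$, and a closed curve's intersection with $\partial w$ equals the boundary of the curve intersected with $w$, hence is empty. Crossing a single $d$-simplex flips the color iff that simplex lies in $z$, so $z$ is exactly the cut separating the voids colored $1$ from the rest; conversely, for any void set $U$ the cut $\delta_{G^*}(U)$ is a cycle, since around each $(d-1)$-simplex $\tau$ the $d$-simplices containing $\tau$ are cyclically arranged and cut out a cyclic sequence of local sectors lying in voids, so the color changes an even number of times around $\tau$ and $\tau$ occurs in $\partial\,\delta_{G^*}(U)$ with even coefficient.

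Next I would assemble the algorithm. A feasible $D_0$ with $\partial D_0=C$ exists since $C$ is null-homologous and $\K$ has no $(d+1)$-simplices, and is found by Gaussian elimination over $\Z_2$; every feasible chain is then $D_0+\delta_{G^*}(U)$, and $|D_0+\delta_{G^*}(U)|$ counts the $d$-simplices on which the bipartition of voids disagrees with $D_0$ — those $\sigma\in D_0$ whose two voids lie on the same side, plus those $\sigma\notin D_0$ whose two voids lie on opposite sides. This is exactly an instance of Min-2CNF-deletion (equivalently, two-cluster correlation clustering, which generalizes Min-UnCut) on the $\beta_d+1$ void-variables, with a disequality constraint for each $\sigma\in D_0$ and an equality constraint for each $\sigma\notin D_0$; rewriting each such constraint as its two equivalent $2$-clauses keeps the variable set fixed and makes the number of violated clauses equal to $|D_0+\delta_{G^*}(U)|$. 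Running the known $O(\sqrt{\log n})$-approximation for Min-2CNF-deletion with $n=\beta_d+1$ and mapping the rounded bipartition back to a chain yields an $O(\sqrt{\log\beta_d})$-approximate minimum bounded chain, the degenerate cases $\beta_d\le 1$ (at most two voids, hence at most two candidate chains) being solved by brute force.

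The routine ingredients here are the linear algebra producing $D_0$, the bookkeeping of the reduction, and the invocation of the approximation black box. I expect the main obstacle to be making the duality of the second step fully rigorous for arbitrary embedded complexes, which need not be manifolds: one must carefully justify the local cyclic picture of $d$-simplices around a $(d-1)$-simplex and the well-definedness of the void $2$-coloring via mod-$2$ intersection with a null-homologous $d$-chain. A secondary, implementation-level difficulty is extracting $G^*$ from the input embedding — determining the two sides of each $d$-simplex and the cyclic order of $d$-simplices around each $(d-1)$-simplex, then gluing these local models into the global void structure — which is a standard but somewhat intricate polynomial-time computation.
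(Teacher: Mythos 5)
Your proposal follows essentially the same route as the paper: build the dual graph on the $\beta_d+1$ complementary voids, use cycle/cut duality to parametrize all chains bounding $C$ as $D_0\oplus\delta(U)$ for a precomputed feasible $D_0$, and encode the resulting ``complete a cut to agree with $D_0$'' objective as Min-2CNF-Deletion on the void variables, invoking the Agarwal et al.\ $O(\sqrt{\log n})$ black box. The paper factors this through an intermediate ``minimum cut completion'' problem and a small preprocessing step for dual loops, but your equality/disequality clause gadget and accounting are identical to its Lemmas on the reduction, so the argument is correct and not genuinely different.
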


\begin{theorem}
\label{thm:MBC_fpt}
There exists an $O(15^k\cdot k \cdot n_{d}^3)$ time exact algorithm for the minimum bounded chain problem for simplicial complexes embedded in $\R^{d+1}$, where $k$ is the number of $d$-simplices in the optimal solution.
\end{theorem}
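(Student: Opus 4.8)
The plan is to use the embedding of $\K$ in $\R^{d+1}$ to rewrite the problem as a cut problem on a ``dual'' graph built from the complementary regions, and then to solve that cut problem with an iterative-compression routine of the type used for \textsc{Edge Bipartization}.

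\textbf{Step 1: reduce to a closest-cut problem.} Fix any $d$-chain $D_0$ with $\partial D_0 = C$; one exists because $C$ is null-homologous, and it can be computed by Gaussian elimination over $\Z_2$ on the boundary matrix $\partial_d$. Every feasible chain is then $D_0 + Z$ for a $d$-cycle $Z \in Z_d(\K) = \ker\partial_d$, so we must find the $Z$ minimizing $|D_0 + Z|$ ($\Z_2$-chains being sets of $d$-simplices and $+$ symmetric difference). Since $\K$ has no $(d+1)$-simplices, $Z_d(\K) = H_d(\K;\Z_2)$, and the heart of this step is the claim (a combinatorial form of Alexander duality) that the elements of $Z_d(\K)$ are exactly the edge cuts of the \emph{region graph} $G^*$ whose vertices are the connected components (``voids'') of $\R^{d+1}\setminus|\K|$, with one edge per $d$-simplex $\sigma$ joining the two local sides of $\sigma$ (a loop when both sides lie in the same void). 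For $S \subseteq V(G^*)$, the set $\delta(S)$ of $d$-simplices with exactly one side in $S$ is a cycle: around any $(d-1)$-simplex $\tau$ the incident $d$-simplices are cyclically ordered in the $2$-dimensional normal space of $\tau$, the ``in $S$ / not in $S$'' label of the wedges between consecutive ones changes an even number of times around this cycle, and it changes exactly at the members of $\delta(S)$ containing $\tau$, so $\partial(\delta(S))$ vanishes at $\tau$. Conversely, any $Z \in Z_d(\K)$ locally separates $\R^{d+1}$, hence (as $\R^{d+1}$ is simply connected) separates it globally, and $2$-colouring the complementary pieces $2$-colours the voids and exhibits $Z$ as such a $\delta(S)$. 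Finally $G^*$ is connected, because a generic path between two voids crosses only interiors of $d$-simplices, so $S \mapsto \delta(S)$ is a bijection onto $Z_d(\K)$ (identifying $S$ with its complement); in particular a loop $d$-simplex lies in no cycle, hence is in either all feasible solutions or none, and contributes a fixed amount to the objective.

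\textbf{Step 2: solve the cut problem by iterative compression.} After discarding loop edges and labelling each remaining edge $e$ of $G^*$ by whether $\sigma_e \in D_0$, minimizing $|D_0 + Z|$ becomes: choose a $2$-colouring of $V(G^*)$ minimizing the number of edges whose cut/uncut status disagrees with its label (edges in $D_0$ want to be cut, the rest want to be uncut). This is a labelled (``signed'') generalization of \textsc{Edge Bipartization} -- the frustration index of a signed graph -- and its optimum equals $k$ minus the fixed loop-contribution, hence is at most $k$. I would solve it just as one solves \textsc{Edge Bipartization}: insert the edges of $G^*$ (equivalently the $d$-simplices) one at a time while maintaining a $2$-colouring with at most $k$ violated edges; each insertion pushes the count to at most $k+1$, and the compression step, given a colouring with violation set $W$, branches over the ways an improved colouring can interact with $W$ (which edges of $W$ remain violated, and on which side their endpoints' components fall) and in each branch reduces to one minimum-cut computation in $G^*$. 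A careful count of the cases gives a branching factor of $15$, so over the $n_d$ insertions the total time is $O(15^k\cdot k\cdot n_d^3)$; building $G^*$ (flood-fill a triangulated bounding box of $\K$) and computing $D_0$ are lower-order, and if $k$ is unknown one runs the procedure for $k = 1, 2, \dots$, changing the bound only by a constant.

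\textbf{Expected main obstacle.} The delicate part is Step 1: proving rigorously that the edge cuts of $G^*$ are exactly the $\Z_2$ $d$-cycles when $\K$ is impure or non-manifold -- so that some $d$-simplices are loops and some $(d-1)$-simplices lie in arbitrarily many $d$-simplices -- which is precisely where the embedding is used, via the ``walk around a codimension-$2$ face'' argument and the connectivity of $G^*$. On the algorithmic side the remaining work is to check that the \textsc{Edge Bipartization} compression, stated for making a graph bipartite, carries over to the labelled version and to loop edges while keeping the branching factor $15$; this is routine but requires the bookkeeping sketched above.
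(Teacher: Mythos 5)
Your Step 1 is exactly the paper's reduction: the paper preprocesses away loop edges of the dual graph (each such $d$-simplex is in all feasible chains or in none, Lemma~\ref{preprocess}) and then, fixing a feasible $F$ with $\partial F = C$, turns the problem into what it calls \emph{minimum cut completion} on $\K^*$ --- find a cut $(S,\overline S)$ minimizing $|E_S\oplus F^*|$ (Lemma~\ref{lem:new_cut_problem}). You are in fact more careful than the paper here, since you sketch a proof of cycle/cut duality (the walk around a codimension-$2$ face, plus simple connectivity of $\R^{d+1}$ for the converse) where the paper simply invokes it.

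The gap is in Step 2. You correctly identify the residual problem as the signed/labelled generalization of \textsc{Edge Bipartization} and correctly assert it is FPT by iterative compression, but the claim that ``a careful count of the cases gives a branching factor of $15$'' is unsubstantiated --- the bookkeeping you defer is precisely the content of the quantitative bound, and there is no reason a bespoke compression routine for this problem would land on $15$ rather than some other constant. The paper does not design such a routine: it encodes the cut-completion instance as \emph{2CNF Deletion} (one variable $b(v)$ per region; for an edge in $E'$ the clauses $b(u)\vee b(v)$ and $\neg b(u)\vee\neg b(v)$, otherwise $b(u)\vee\neg b(v)$ and $\neg b(u)\vee b(v)$, so that the number of unsatisfied clauses equals $|E_S\oplus E'|$), and then cites Razgon and O'Sullivan's $O(15^k\cdot k\cdot m^3)$ algorithm for Almost 2-SAT as a black box --- that is the sole source of the $15$. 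To close your gap you should either adopt that encoding, or note that subdividing each unlabelled edge once reduces your signed problem to plain \textsc{Edge Bipartization} and cite a known FPT algorithm for it (which in fact gives a better dependence on $k$ than $15^k$, though the theorem as stated only needs $O(15^k\cdot k\cdot n_d^3)$). As written, the proposal replaces the one step the paper outsources to a nontrivial external theorem with an assertion.
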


\begin{theorem}
\label{thm:MHC_apx}
There exists an $O(\sqrt{\log n_{d+1}})$-approximation algorithm for the minimum homologous chain problem for $d$-chains in $(d+1)$-manifolds.
\end{theorem}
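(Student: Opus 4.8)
The plan is to translate the minimum homologous chain problem on a $(d+1)$-manifold into a minimum-disagreement cut problem on the \emph{dual graph} of the manifold, and then invoke the semidefinite-programming-based $O(\sqrt{\log n})$-approximation for \textsc{MinUnCut} (equivalently, \textsc{Min-2CNF-Deletion}) due to Agarwal, Charikar, Makarychev, and Makarychev.

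First I would build the dual graph $G^*$ of $\M$: put one vertex for each $(d+1)$-simplex, add an auxiliary vertex $v_\infty$, and for each $d$-simplex $\sigma$ add an edge joining its (one or two) cofaces, where a boundary $d$-simplex with a unique coface $\tau$ is joined to $v_\infty$. Over $\Z_2$ the matrix of $\partial_{d+1}$ is, after accounting for $v_\infty$, the vertex--edge incidence matrix of $G^*$: every $d$-simplex of a $(d+1)$-manifold has at most two cofaces, so each column has exactly two ones once $v_\infty$ is included, and orientability is irrelevant mod $2$. Hence $\im \partial_{d+1}$ equals the cut space $\{\delta_{G^*}(S) : S \subseteq V(G^*)\}$, and a $d$-chain $X$ is homologous to the input chain $D$ if and only if $X = D \oplus \delta_{G^*}(S)$ for some $S$ (which we may take to avoid $v_\infty$). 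Writing $x_v \in \{0,1\}$ for the indicator of $S$, a $d$-simplex of $D$ survives in $X$ exactly when its dual edge is \emph{not} cut, and a $d$-simplex outside $D$ enters $X$ exactly when its dual edge \emph{is} cut; therefore $|X|$ is precisely the number of violated constraints of the $\Z_2$-system imposing $x_u \neq x_v$ on the dual edge of each simplex of $D$ and $x_u = x_v$ on every other dual edge, with $x_{v_\infty}$ fixed.

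Next I would note that minimizing this number of disagreements is a special case of \textsc{Min-2CNF-Deletion}: an equality (resp.\ inequality) constraint is a conjunction of two $2$-clauses, and each violated constraint corresponds to exactly one violated clause, so the two optimal values coincide. (Equivalently, replacing each equality edge $uv$ by a two-edge path $u$--$w$--$v$ of inequality edges is value-exact and yields a pure \textsc{MinUnCut} instance.) The ACMM algorithm then returns a vertex set $S$ with $|D \oplus \delta_{G^*}(S)|$ within an $O(\sqrt{\log N})$ factor of the optimum, where $N$ is the number of vertices of the (possibly gadget-augmented) instance. Because $\M$ is a $(d+1)$-manifold, $G^*$ has maximum degree $d+2$, so its number of vertices, its number of edges, and hence $N$ are all $\Theta(n_{d+1})$ for fixed $d$; thus $O(\sqrt{\log N}) = O(\sqrt{\log n_{d+1}})$, and we output $X = D \oplus \delta_{G^*}(S)$.

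The step I expect to need the most care is the dual-graph correspondence: one must verify that $\im \partial_{d+1}$ is exactly the cut space of $G^*$ — using that each $d$-simplex has at most two cofaces, that the $\Z_2$ setting erases orientation, and that boundary $d$-simplices are correctly absorbed by $v_\infty$ (so that the $v_\infty$ row of the incidence matrix is the sum of the remaining rows). Once that dictionary is in place, the reduction to \textsc{Min-2CNF-Deletion}, the appeal to the ACMM approximation, and the bounded-degree vertex count are routine.
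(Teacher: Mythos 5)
Your proposal is correct and follows essentially the same route as the paper: the dual graph with an auxiliary boundary vertex identifies chains homologous to $D$ with sets of the form $D \oplus \delta(S)$, the resulting disagreement-minimization problem (the paper's ``minimum cut completion problem'') is encoded as Min-2CNF-Deletion with one violated clause per disagreeing edge, and the ACMM $O(\sqrt{\log N})$-approximation with $N = \Theta(n_{d+1})$ gives the bound. The only differences are cosmetic (the paper names the intermediate cut-completion problem explicitly, and keeps equality/inequality clause pairs rather than your optional path gadget).
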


\begin{theorem}
\label{thm:MHC_fpt}
There exists an $O(15^k\cdot k \cdot n_{d}^3)$ time exact algorithm for the minimum homologous chain problem for $d$-chains in $(d+1)$-manifolds, where $k$ is the size of the optimal solution.
\end{theorem}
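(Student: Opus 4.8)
The plan is to reduce the minimum homologous chain problem for $d$-chains in a $(d+1)$-manifold to exactly the combinatorial core already solved by Theorem~\ref{thm:MBC_fpt}, and then invoke that algorithm; this is what accounts for the identical running time. The first step is to pass to the dual graph $G$ of $\M$: its vertices are the $(d+1)$-simplices of $\M$ and its edges are the $d$-simplices, a $d$-simplex $\sigma$ becoming the edge joining the (at most two) top simplices containing it, and if $\partial\M\neq\emptyset$ we add a single apex vertex joined to every boundary $d$-simplex. Working one connected component at a time, Poincaré--Lefschetz duality over $\Z_2$ identifies the $(d+1)$-chains of $\M$ with the vertex subsets of $G$ and the boundary operator $\partial_{d+1}$ with the graph coboundary $S\mapsto\delta_G(S)$ (the edge cut of $S$). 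Hence, writing $E_D\subseteq E(G)$ for the edge set dual to the input chain $D$, the $d$-chains homologous to $D$ correspond precisely to the coset $E_D+\mathcal{B}(G)$, where $\mathcal{B}(G)$ is the cut space of $G$, and the minimum homologous chain is a minimum-weight representative of this coset.

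Next I would rephrase this coset problem in the form handled in the proof of Theorem~\ref{thm:MBC_fpt}. A weight-$k$ representative is an edge set $X$ with $E_D\triangle X\in\mathcal{B}(G)$, equivalently an edge set $X$ of size $k$ meeting every cycle $Z$ of $G$ with $|Z\cap E_D|$ odd; this is exactly the problem of deleting the fewest edges to balance the signed graph $(G,\sigma)$, with $\sigma$ equal to $-1$ on $E_D$ and $+1$ elsewhere --- the same signed edge-bipartization problem that the embedded structure of $\K$ produces in Theorem~\ref{thm:MBC_fpt}, the relevant graph there being the adjacency graph of the complementary voids of $\K$ in $\R^{d+1}$, with the $d$-simplices of $\K$ as its edges. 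Having matched the two problems, I would run the FPT algorithm of Theorem~\ref{thm:MBC_fpt} (or, equivalently, a standalone edge-bipartization routine by iterative compression) on $G$: it processes the edges one at a time, and in each of the $O(k)$ compression rounds it branches over the bounded number of ways an improved solution can interact with the current size-$(k{+}1)$ balancing set --- the source of the $15^k$ factor --- each branch reducing to a single minimum $(s,t)$-cut, solvable by one maximum-flow call in $O(n_d^3)$ time. Reading the resulting balancing set back through the duality yields the desired minimum $d$-chain homologous to $D$, in total time $O(15^k\cdot k\cdot n_d^3)$.

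The step I expect to be the main obstacle is the homological dictionary of the first paragraph: making the correspondence between $\Z_2$-homology classes of $d$-chains in $\M$ and cosets of $\mathcal{B}(G)$ fully rigorous, since the dual cell structure of a triangulated manifold is not itself simplicial, and since the manifold-with-boundary and disconnected cases have to be folded in via the apex vertex without perturbing the correspondence (in particular one must check that $\partial D=\partial X$ is automatically respected on the graph side, so that the coset is genuinely $E_D+\mathcal{B}(G)$ and nothing smaller). Once that is in place the remainder is a translation: the algorithmic content lives entirely in the compression routine of Theorem~\ref{thm:MBC_fpt}, and its correctness and running-time analysis --- that each branch collapses to a polynomial-time min-cut instance, and that the branch count gives the stated exponent base --- transfer to the dual graph of $\M$ verbatim.
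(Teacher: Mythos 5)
Your proposal is essentially the paper's proof: the same dual graph (one vertex per $(d+1)$-simplex plus an apex vertex $v_B$ for the boundary, one edge per $d$-simplex), the same identification of chains homologous to $D$ with the coset of the cut space through $E_D$ (this is the paper's ``minimum cut completion'' problem, Lemma~\ref{lem:new_cut_problem_min_hom_chain}), and the same final appeal to the FPT machinery shared with Theorem~\ref{thm:MBC_fpt}. The only cosmetic difference is that the paper obtains the $O(15^k\cdot k\cdot n_d^3)$ bound by encoding the cut-completion instance as 2CNF Deletion and invoking Razgon--O'Sullivan (Lemma~\ref{lem:cutCompletionAlgs}), rather than describing the iterative-compression routine directly on the signed graph; the paper also avoids your worry about Poincar\'e--Lefschetz duality by arguing combinatorially on weak pseudomanifolds, where a null-homologous $d$-chain is by definition $\partial S^*$ for a set $S$ of dual vertices and hence literally an edge cut.
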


The running times for the first two theorems is computed assuming that the dual graph of the complex in $\R^{d+1}$ is available. 
The last two theorems hold, more generally, for weak pseudomanifolds studied by Dey et al. in~\cite{Dey2019}.

On the hardness side, we show that constant factor approximation algorithms for these problems (minimum bounded chain and minimum homologous chain) are unlikely.

\begin{theorem}
\label{thm:MBC_hardness}
The minimum bounded chain problem is 
\begin{enumerate}
\item [(i)] hard to approximate within a $(1+\varepsilon)$ factor for some $\varepsilon > 0$ assuming $\textsf{P} \ne \NP$, and 
\item [(ii)] hard to approximate within any constant factor assuming the unique games conjecture, 
\end{enumerate}
even if $\K$ is a $2$-dimensional simplicial complex embedded in $\R^3$ with input cycle $C$ embedded on the boundary of the unbounded volume in $\R^3 \setminus \K$.
\end{theorem}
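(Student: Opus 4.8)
The plan is a polynomial-time, approximation-preserving reduction from Min-UnCut --- given a graph, $\Z_2$-label its vertices so as to minimize the number of monochromatic (``uncut'') edges --- to the minimum bounded chain problem. Min-UnCut is \APX-hard by the PCP theorem and, assuming the unique games conjecture, \NP-hard to approximate within every constant factor; both remain true after the standard expander-based degree reduction, so we may take the input graph $H$ to have bounded degree. The reduction rests on a dual reformulation. For a $2$-complex $\K$ embedded in $\R^3$, let $G^\ast$ be the dual graph, with one vertex per connected component (``volume'') of $\R^3\setminus\K$, one edge per $2$-simplex joining the two volumes it bounds, and $v_\infty$ the unbounded volume. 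Since $\K$ has no $3$-simplices, $Z_2(\K)=H_2(\K;\Z_2)$ is spanned by the boundaries of the bounded volumes, which is exactly the cut space of $G^\ast$; hence the $2$-chains with boundary $C$ form one coset $D_0+Z_2(\K)$, and selecting an element is equivalent to choosing a $\Z_2$-labeling $\ell$ of the volumes with $\ell(v_\infty)$ fixed. A short computation shows that a $2$-simplex $\sigma$ bounding volumes $u,v$ lies in the selected chain iff $\ell(u)+\ell(v)\neq c_\sigma$, where $c_\sigma=1$ exactly when $\sigma\in D_0$; so the minimum bounded chain equals the minimum over $\ell$ of the number of violated equations $\ell(u)+\ell(v)=c_\sigma$ on $G^\ast$ --- a weighted instance of minimizing the number of unsatisfied $2$-variable $\Z_2$-linear equations, i.e.\ of (a generalization of) Min-UnCut. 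This is polynomial-time solvable whenever $G^\ast$ is planar, so the hardness must exploit that, unlike dual graphs of planar graphs, dual graphs of $2$-complexes in $\R^3$ can be non-planar.

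From here the reduction is a realization problem: given bounded-degree $H$, build $\K$ so that $G^\ast$ is $H$ with one extra pinned vertex and \emph{every} edge of $H$ carries a disagreement constraint ($c_\sigma=1$). I would tile a large ball by small ``chamber'' regions, one per vertex of $H$, with the interface between the chambers of $u$ and $v$ made of one $2$-simplex per unit of weight of the edge $uv$ (a few parallel triangles --- which is why bounded degree keeps each chamber small), place all interface triangles into $D_0$, let the outer surface of the ball be $\Sigma=\partial v_\infty$, and set $C:=\partial D_0$. Completeness is immediate: a cut of $H$ gives a bounded chain of size $t+(\text{number of uncut edges})$, where $t$ is a fixed additive term from residual envelope triangles, which we deliberately pair --- one member in $D_0$, one not --- so that each pair contributes $1$ to \emph{every} bounded chain and constrains no $\ell$-value. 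Soundness requires checking that every bounded chain comes from a chamber labeling and that the interfaces are heavy enough to forbid a cheaper chain routed through the scaffolding.

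The step I expect to be the main obstacle is making this realization succeed on three fronts simultaneously. First, it must be a genuine PL embedding in $\R^3$; the extra dimension gives ample room to route the chambers, interfaces and envelope without illegal crossings, so this point is the mildest. Second --- the crux --- $C=\partial D_0$ must be an embedded $1$-manifold lying on the \emph{outer} boundary $\Sigma$, which genuinely restricts which patterns $c$ are attainable: one must route the ``seams'' of $D_0$ out to $\Sigma$ and make them close up there into disjoint simple closed curves, and also keep the instance genuinely hard (for instance with $C$ non-separating on $\Sigma$ and $G^\ast$ non-planar, since otherwise the cut problem collapses to a polynomially solvable $s$--$t$ cut or a planar Min-UnCut). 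Third, the additive term $t$ must be negligible: it comes only from the triangles of the envelope, whose number scales with the surface area rather than the volume of the packing, hence is $o(|E(H)|)$, whereas the optimum of the hard instances is $\Theta(|E(H)|)$, so the gap is preserved up to a $1+o(1)$ factor.

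Granting the realization, the two conclusions follow by the choice of source instance: with the PCP-based \APX-hardness of Min-UnCut one obtains part~(i), a $(1+\varepsilon)$-inapproximability for some fixed $\varepsilon>0$; with the unique-games-based inapproximability of Min-UnCut within every constant factor, together with the negligibility of $t$, one obtains part~(ii). In both cases the constructed instance has $C$ embedded on the boundary of the unbounded volume, as required.
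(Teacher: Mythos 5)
Your approach is conceptually the same as the paper's: reduce from Min-UnCut via the cut-space/dual-graph correspondence (which the paper packages as a ``cut completion'' problem and then re-reads as 2CNF deletion) and realize the source graph geometrically as a thickened embedding in $\R^3$, with disks inserted across the tubes carrying the $c_\sigma=1$ constraints and $C$ appearing on the boundary circles of those disks, hence on the outer shell. Your dual reformulation in the first paragraph is exactly right and matches the paper's cycle/cut duality argument.

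Where you diverge is in how the scaffolding cost is tamed, and there you have a gap. You keep the interfaces cheap (one triangle per unit weight) and assert that the residual ``envelope'' cost $t$ is $o(|E(H)|)$ by a surface-area-vs-volume heuristic. That heuristic presumes you can pack one chamber per vertex of $H$ into a ball so that geometric adjacency matches the adjacency of $H$ and the unbounded volume touches only the ball's outer sphere; but a 3-dimensional polytopal tiling cannot realize an arbitrary bounded-degree graph as its adjacency graph, so any honest realization (including the thickened-graph construction you would actually use) has tube walls exposed to $V_\infty$ and therefore $t=\Theta(|E(H)|)$, not $o(|E(H)|)$. The claim ``gap preserved up to a $1+o(1)$ factor'' is thus unjustified. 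The argument is repairable --- since the hard Min-UnCut instances have optimum $\Theta(|E(H)|)$, a $\Theta(1)$ ratio of scaffolding to optimum still yields a $(1+\varepsilon')$-inapproximability for a smaller $\varepsilon'>0$ and still rules out all constant factors under UGC --- but that is a different and weaker calculation than the one you wrote, and it also loses the clean bounded-degree-free statement. The paper avoids this issue entirely by inverting the balance: each interface disk is triangulated into $58m+2$ triangles while the scaffolding totals at most $58m$, so the scaffolding costs less than one interface; this yields $\lvert Q\rvert/\tau - 1 \le \lvert E_S\oplus E'\rvert \le \lvert Q\rvert/\tau$ (Lemma~\ref{lem:cut_completion_to_bounding_chain}), and the residual $+1$ additive slack is absorbed by a brute-force fallback on instances with small optimum (Lemma~\ref{lem:cut_completion_to_bounding_chain_2}). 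You should adopt the heavy-interface idea; it eliminates the need for degree reduction, for ball-packing, and for the unproven envelope bound.
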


\begin{theorem}
\label{thm:MHC_hardness}
The minimum homologous chain problem is 
\begin{enumerate}
\item [(i)] hard to approximate within a $(1+\varepsilon)$ factor for some $\varepsilon > 0$ assuming $\textsf{P} \ne \NP$, and 
\item [(ii)] hard to approximate within any constant factor assuming the unique games conjecture, 
\end{enumerate}
even when the input chain is a $1$-cycle on an orientable $2$-manifold.
\end{theorem}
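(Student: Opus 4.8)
The plan is to reduce from a known hard problem on graphs to the minimum homologous chain problem on an orientable $2$-manifold, so that the APX-hardness and UGC-based inapproximability transfer directly. Since we are working with $1$-cycles on a surface, the natural source of hardness is the minimum-weight cut / min-uncut family of problems, or — more in keeping with the ``shortest path'' flavor of the minimum bounded chain problem — a reduction resembling the one that establishes Theorem \ref{thm:MBC_hardness}. Concretely, I would start from a vertex- or edge-weighted instance of a problem such as \textsc{Min-2-Lin} over $\Z_2$ (or \textsc{Nearest Codeword} / \textsc{Min-Uncut}), which is APX-hard and, under the unique games conjecture, hard to approximate within any constant factor. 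The key point is that homology classes of $1$-cycles on a surface of genus $g$ form the vector space $\Z_2^{2g}$, and computing a minimum-weight cycle in a prescribed homology class is exactly a system of parity constraints: a $1$-chain $X$ is homologous to the input chain $D$ iff $X - D$ is a boundary, i.e. iff $X$ and $D$ pair identically with every element of a cohomology basis. This is a $\Z_2$-linear condition, so the minimum homologous chain problem is naturally a \emph{minimum-weight coset representative} problem, which is where codeword-type hardness enters.

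The steps, in order, would be: (1) Take an instance of the source APX-hard problem (a set of $\Z_2$-linear equations in variables $x_1,\dots,x_m$ with weights on equations, asking to minimize total weight of violated equations). (2) Build an orientable surface $S$ together with a fixed reference $1$-cycle $D$ so that the genus (and hence the homology group $H_1(S;\Z_2)$) encodes the equations, and so that choosing a $1$-chain $X$ homologous to $D$ corresponds to choosing an assignment to the $x_i$ together with a set of ``violated'' equations; the weights on simplices of $S$ are set so that $\|X\|$ equals (a constant plus) the total weight of violated equations. A clean way to do this is to attach, for each variable, a handle whose two non-contractible loops represent the two values of $x_i$, and for each equation a gadget subsurface that forces a parity relation; routing $D$ through these gadgets fixes the target coset. (3) Verify that $S$ is an orientable $2$-manifold (closed, or with boundary, as the statement allows a $1$-cycle on a $2$-manifold), that $D$ is a genuine $1$-cycle, and that the construction has polynomial size. (4) Argue the reduction is approximation-preserving: an $\alpha$-approximate minimum homologous chain yields an $\alpha$-approximate (or $\alpha'$-approximate, after accounting for the additive constant, which we can make negligible by scaling) solution to the source problem, and conversely. (5) Invoke the known APX-hardness and UGC-hardness of the source problem to conclude both (i) and (ii).

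The main obstacle I expect is step (2): engineering a gadget surface in which the homology class of $D$ cleanly decouples into independent ``coordinates'' for the variables while the equations are enforced purely through the topology (the genus and the attaching maps), without introducing spurious low-weight cycles in the wrong coset or allowing $X$ to ``cheat'' by using an unintended representative. Controlling the set of minimal cycles in a given homology class on the constructed surface — and in particular making sure the only way to be cheap is to satisfy many equations — is the delicate part; this typically requires making the gadget edges that correspond to ``honest'' choices very light and all alternative routes heavy, and then checking that no combination of heavy edges beats the intended solution. A secondary technical point is ensuring orientability throughout (each handle and gadget must be glued so as to preserve an orientation of $S$) and keeping the additive constant in the cost correspondence controlled relative to the optimum, so that the gap in the source problem is preserved up to constants — enough for APX-hardness and, with a standard gap-amplification or direct-product argument on the source side, enough for the any-constant-factor hardness under UGC.
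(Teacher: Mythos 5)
Your overall architecture is the right one and matches the paper's: reduce from a $\Z_2$-parity-violation problem (the paper uses minimum uncut, routed through an intermediate ``cut completion'' problem), build an orientable surface whose first homology encodes the combinatorics, set the reference cycle $D$ to pin down the target coset, and handle the additive slack in the cost correspondence by noting it is negligible when the optimum is large (the paper brute-forces the case of small optimum). You also correctly identify that minimum homologous chain is a minimum-weight coset representative problem and that the danger is spurious cheap representatives.

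However, the substantive content of the proof is the gadget surface, and that is exactly the step you leave open (``the main obstacle I expect is step (2)''), so there is a genuine gap. Your sketched design --- one handle per variable plus an equation gadget per constraint --- is not worked out and is harder to control than what is actually needed. The paper's construction is simpler and avoids per-equation gadgets entirely: take the input graph $G=(V,E)$ of a minimum uncut instance (equivalently, cut completion with $E'=E$), thicken an embedding of $G$ in $\R^3$ into an orientable surface $\M$ (a cube per vertex, a tube per edge), and mark on each edge tube the $4$-edge meridian cycle (``edge ring'') that would bound the deleted cross-sectional disk. The vertex regions are the components of $\M$ minus the edge rings; the boundary of a union of vertex regions is exactly the set of edge rings of the corresponding cut, so sums of edge rings indexed by cuts are precisely the null-homologous combinations. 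Setting $D$ to the rings of $E'$, a cycle homologous to $D$ that is a union of rings has cost $4|E_S\oplus E'|$. The remaining issue --- preventing a homologous cycle from ``cheating'' through the interior of a vertex region --- is solved not by weighting but by subdividing every non-ring edge into a path of length $4\lceil\alpha\rceil|E|+1$ and triangulating the resulting large faces so that shortest paths are preserved; any $\alpha$-approximate solution can then be locally rerouted onto edge rings without increasing its length. Without some concrete mechanism of this kind your argument in step (2) cannot be checked, and in particular your worry about orientability and about unintended minimal representatives is real but is resolved essentially for free by the thickened-graph construction rather than by a handle-and-gadget encoding.
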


For the sake of completeness, we also give a more general presentation of the result of Kirsanov and Gortler~\cite{Kirsanov2004}, that minimum bounded chain is polynomial time solvable for a $d$-dimensional simplicial complex $\K$ embedded in $\R^{d+1}$ and input chain $C$ null-homologous on the boundary of the unbounded region in $\R^{d+1} \setminus \K$.  This can be found in Section~\ref{sec:exact}.  This algorithmic result is likely to be the most general possible, given Theorem~\ref{thm:MBC_hardness}.

\subsection{Related Work}

\paragraph{Chain problems over $\Z$ and $\R$}

Research on the minimum bounded chain problem is limited to the case of $\Z$-homology, where linear programming techniques can be employed algorithmically.  Sullivan described the problem as the discretization of the minimal spanning surface problem~\cite{Sullivan1990} with Kirsanov reducing the problem to an instance of minimum cut in the dual graph~\cite{Kirsanov2004}.  Sullivan's work is on the closely related cellular complexes, but under the same restrictions are we study (embedded in $\R^d$) and Kirsanov studies the problem in embedded simplicial complexes.

Likewise, research on minimum homologous chain has largely worked in $\Z$-homology.
Dey, Hirani and Krishnamoorthy formulate the minimum homologous chain problem over $\Z$ as an integer linear program and describe topological conditions for the linear program to be totally unimodular (and so, poly-time solvable)~\cite{Dey2011}.  Of course, integer linear programming approaches do not extend to $\Z_2$-homology.

This linear programming approach was then applied to the minimum bounded chain problem (over $\Z$) by Dunfield and Hirani~\cite{Dunfield2011}.  Moreover, they show the minimum bounded chain problem is $\NP$-complete via a reduction from 1-in-3 SAT. The gadget they use was originally used by Agol, Hass and Thurston to show that the minimal spanning area problem is $\NP$-complete~\cite{Agol2006}.

Linear programming techniques have also been used by Chambers and Vejdemo-Johansson to solve the minimum bounded chain problem in the context of $\R$-homology \cite{minarea}.
In $\R$-homology Carvalho et al provide an algorithm finding a (not necessarily minimum) bounded chain in a manifold by searching the dual graph of the manifold \cite{topdim}.
\paragraph{Chain problems over $\Z_2$}

Special cases of the minimum homologous chain problem have been studied in $\Z_2$ homology.  The homology localization problem is the case when the input chain is a cycle.  
The homology localization problem over $\Z_2$ in surface-embedded graphs is known to be \NP-hard via a reduction from maximum cut by Chambers et al.~\cite{Chambers2009}; our reduction is from the complement problem minimum uncut.
On the algorithmic side, Erickson and Nayyeri provide a $2^{O(g)} n \log n$ time algorithm where $g$ is the genus of the surface~\cite{Erickson2011}.  
Using the idea of annotated simplices, Busaryev et al.\ generalize this algorithm for homology localization of $1$-cycles in simplicial complexes; the algorithm runs in $O(n^\omega) + 2^{O(g)}n^2 \log n$ time where $\omega$ is the exponent of matrix multiplication, and $g$ is the first homology rank of the complex~\cite{Annotate}.

Using a reduction from the nearest codeword problem Chen and Freedman showed that homology localization with coefficients over $\Z_2$ is not only $\NP$-hard, but it cannot be approximated within any constant factor in polynomial time~\cite{Chen2011}. These hardness results hold for a 2-dimensional simplicial complex, but not necessarily for 2-dimensional complexes embedded in $\R^3$.  They also give a polynomial-time algorithm for the special case of $d$-dimensional simplicial complex that is embedded in $\R^d$.  (This is different from our setting of a $d$-dimensional simplicial complex that is embedded in $\R^{d+1}$; however the algorithm also reduces to a minimum cut problem in a dual graph, much like that of Kirsanov and Gortler.)

\paragraph{Algebraic formulations}

The minimum bounded chain problem over $\Z_2$ can be stated as a linear algebra problem, but this has little algorithmic use since the resulting problems are intractable.   The algebraic formulation is to find a vector $x$ of minimum Hamming weight that solves an appropriately defined linear system $Ax=b$.  (It is possible to reduce in the reverse direction, but the resulting complex is not embeddable in general, and so provides no new results.)

In coding theory this algebraic problem is a well studied decoding problem known as maximum likelihood decoding, and it was shown to be $\NP$-hard by Berlekamp, McEliece and van Tilborg~\cite{alghard,Vardy97IntractibilityMinDistCodes}.
Downey, Fellows, Vardy and Whittle show that maximum likelihood decoding is $\textsf{W[1]}$-hard~\cite{Downey99ParamComplexCodTheory}.
Further, Austrin and Khot show that maximum likelihood decoding is hard to approximate within a factor of $2^{{(\log n)}^{1 - \epsilon}}$ under the assumption that $\NP \nsubseteq \textsf{DTIME}(2^{(\log n)^{O(1)}})$~\cite{Austrin11DetReducGapMinDistCode}.
This work was continued by Bhattacharyya, Gadekar, Ghosal and Saket who showed that maximum likelihood decoding is still $\textsf{W[1]}$-hard when the problem is restricted to $O(k \log n) \times O(k \log n)$ sized matrices for some constant $k$~\cite{Bhattacharyya16HardnessOfLearningSparse}.

\bigskip

\paragraph{Paper organization.}
In Section~\ref{sec:prelim}, we give formal definitions for the paper.  In Section~\ref{sec:FPT}, we present our approximation algorithms and fixed-parameter tractable algorithms.
In Section~\ref{sec:hardness}, we present our hardness results.

\section{Preliminaries}\label{sec:prelim}
\paragraph{Simplicial complexes}
Given a set of vertices $V$ we define an \emph{abstract simplicial complex} $\K$ to be a subset of the power set of $V$ such that the following property holds: if $\sigma \in \K$ and $\tau \subset \sigma$ then $\tau \in \K$.
We call any $\sigma \in \K$ a \emph{simplex} and define the dimension of $\sigma$ to be $|\sigma| - 1$ if $|\sigma| - 1 = d$ we call $\sigma$ a $d$-simplex.
Further, we call 0-simplices, 1-simplices, and 2-simplices \emph{vertices}, \emph{edges}, and \emph{triangles}.
We define the dimension of $\K$ to be equal to the largest dimension of any simplex in $\K$.
If $\K$ has dimension $d$ we refer to $\K$ as a $d$-simplicial complex or $d$-complex.
We refer to any subset of a $d$-simplex $\sigma$ as a \emph{face} of $\sigma$. If $\tau$ is a face of $\sigma$ with dimension $d-1$ we refer to $\tau$ as a \emph{facet} of $\sigma$.

\paragraph{Homology}
In this paper we work in simplicial homology with coefficients over the finite field $\Z_2$.
Here we briefly define the concepts from homology that will be used throughout this paper.
We assume familiarity with the basics of algebraic topology, and refer the reader to standard references \cite{Hatcher2002, Munkres84} for the details.

Given a simplicial complex $\K$ we define the $d$th \emph{chain group} of $\K$ to be the free abelian group, with coefficients over $\Z_2$, generated by the $d$-simplices in $\K$.
We denote the chain group as $C_d(\K)$ and note that its elements are expressed as formal sums $\bigoplus \alpha_i \sigma_i$ where $\alpha_i \in \Z_2$ and $\sigma_i \in \K$ is a $d$-simplex.
We call the elements of the chain group \emph{chains} or more specifically $d$-chains.
When working over $\Z_2$ there is a one-to-one correspondence between $d$-chains and sets of $d$-simplices in $\K$.
It follows that adding two $d$-chains over $\Z_2$ is the same thing as taking the symmetric difference of their corresponding sets. Hence, we use the notation $\sigma \oplus \tau$ to denote the sum of two $d$-chains. By abuse of notation we will also use $\oplus$ to denote the symmetric difference of sets, but the context should always be clear.

For a $d$-simplex $\sigma$ we define its \emph{boundary} $\partial \sigma$ to be the sum of the $(d-1)$-simplices contained in $\sigma$.
We extend this operation linearly to obtain the \emph{boundary operator} on chain groups, $\partial_d \colon C_d(\K) \rightarrow C_{d-1}(\K)$. We will often drop the subscript when the context is clear.
Note that the composition $\partial_{d-1} \partial_d$ is always equal to the zero map.
If $\partial \sigma = \tau$ we say that $\sigma$ is bounded by $\tau$.
We call a chain $\sigma$ a \emph{cycle} if $\partial \sigma = 0$.

By $Z_d(\K)$ we denote the $d$th \emph{cycle group} of $\K$. This is subgroup of $C_p(\K)$ generated by the $d$-simplices in $\ker \partial_d$.
Similarly, by $B_d(\K)$ we denote the $d$th \emph{boundary group} of $\K$, which is the subgroup of $C_p(\K)$ generated by the $d$-simplices in $\im \partial_{d+1}$.
Since $\partial_{d+1} \partial_d = 0$ we have that $B_d(\K)$ is a subgroup of $Z_d(\K)$.
We define the $d$th \emph{homology group} of $\K$, denoted $H_d(\K)$, to be the quotient group $Z_d(\K) / B_d(\K)$.
The $d$th Betti number of $\K$, denoted $\beta_d$, is defined to be the dimension of $H_d(\K)$.
We call a $d$-chain $\sigma$ null-homologous if it is a boundary, that is $\sigma \in B_d(\K)$.
Further, we call two $d$-chains $\sigma$ and $\tau$ \emph{homologous} if their difference is a boundary, that is $\sigma \oplus \tau \in B_d(\K)$.

\paragraph{Embeddings and duality}
Given a $d$-complex $\K$ an embedding of $\K$ is a function $f \colon \K \rightarrow \R^{d+1}$ such that $f$ restricted to any simplex in $\K$ is an injection.
Further, for any two simplices $\sigma, \tau \in \K$ we require that $f(\sigma) \cap f(\tau) = f(\sigma \cap \tau)$. That is, the images of two simplices only intersect at their common faces.
The function $f$ is an \emph{embedding} of the abstract simplicial complex $\K$. In this paper we make no distinction between $\K$ and an embedding of $\K$. Hence, we use the notation $\K$ to refer to both and refer to $\K$ as an \emph{embedded simplicial complex}.

The Alexander duality theorem, a higher dimensional analog of the Jordan curve theorem, states that $\R^{d+1} \setminus \K$ is partitioned into $\beta_d + 1$ connected components.
Exactly one of these connected components is unbounded, and we refer to the unbounded component as $V_\infty$.
Using this partition we define the dual graph $\K^*$ of $\K$.
$\K^*$ has one vertex for each connected component of $\R^{d+1} \setminus \K$ with the vertex corresponding to $V_\infty$ denoted by $v_\infty$.
Further, $\K^*$ has one edge for each $d$-simplex in $\K$.  There is an edge between two vertices representing connected components $V_1$ and $V_2$ in $\K$ if there is a $d$-simplex contained in the intersection of the topological closures of $V_1$ and $V_2$.  Note that $\K^*$ can have parallel edges and self-loops.
Since each $d$-simplex can be in the closure of at most two connected components we have a one-to-one correspondence between $d$-simplices in $\K$ and edges in $\K^*$.
If $S$ is a set of $d$-simplices in $\K$ we denote their corresponding edges in $\K^*$ by $S^*$.
Similar to planar graphs, there is a duality between $d$-cycles in $\K$ and edge cuts in $\K^*$.
There exists a one-to-one correspondence between $d$-cycles in $\K$ and minimal edge cuts in $\K^*$.
We refer to this correspondence as cycle/cut duality, and it will play a central role in many of our proofs.

By $\shell(\K)$ we denote the \emph{outer shell} of $\K$. This is defined to be the subcomplex of $\K$ consisting of all $d$-simplices whose corresponding edges in $\K^*$ are incident to $v_\infty$.
Equivalently, it is also the subcomplex of $\K$ consisting of all $d$-simplices contained in the boundary of $V_\infty$.

We endow the embedding of a simplicial complex $\K$ with the subspace topology inherited from $\R^{d+1}$.
We call $\K$ a $d$-dimensional \emph{manifold} if every point in its embedding is contained in a neighborhood homeomorphic to $\R^d$.
If every point in the embedding of $\K$ is contained in a neighborhood homeomorphic to either $\R^d$ or the $d$-dimensional half-space we call $\K$ a \emph{manifold with boundary}.

\paragraph{Graph cuts}
Let $G = (V, E)$ be a graph. For any two subsets $V_1, V_2 \subset V$ a $(V_1, V_2)$-cut is a set of edges $E'$ such that the graph $G' = (V, E \setminus E')$ contains no path from $V_1$ to $V_2$.
Often we will consider $(S, \overline{S})$-cuts for some $S \subset V$ where $\overline{S}$ denotes the complement of $S$ in $V$.
By $E_S$ we refer to the edge set corresponding to all edges that have one endpoint in $S$ and the other in $\overline{S}$, which is the minimum $(S, \overline{S})$-cut.
We extend this notation to vertices. For any two vertices $s, t \in V$ an $(s, t)$-cut refers to a set of edges whose removal disconnects $s$ from $t$.

\paragraph{The minimum bounded/homologous chain problems} 
Now we give the formal statement of the minimum bounded chain problem.
Given a $d$-dimensional simplicial complex $\K$ and a $(d-1)$-cycle $C$ contained in $\K$ the \emph{minimum bounded chain} problem $(\K, C)$ asks to find a $d$-chain $X$ with $\partial X = C$ such that the cost of $X$ is minimized.
The cost of $X$ is given by its $\ell_1$ norm $\|X\|_1$. Here we are treating $X$ as an $n$-dimensional indicator vector where $n$ is the number of $d$-simplices in $\K$.
The simplicial complex $\K$ may be weighted by assigning a real number to each $d$-simplex in $\K$.
In this case the cost of $X$ is given by $\langle W, X \rangle$, where $W$ is a vector assigning weights to the $d$-simplices of $\K$.

Now let $D$ be a $d$-chain, which may or may not be a cycle.
The minimum homologous chain problem asks to find a minimum $d$-chain $X$ such that $X = D \oplus \partial V$ for some $(d+1)$-chain $V$,
equivalently, the minimum $d$-chain $X$ such that $D \oplus X$ is null-homologous. The cost of $X$ as well as the weighted problem are defined the same as in the previous paragraph.

In this paper, we study the minimum bounded chain problem for complexes embedded in $\R^{d+1}$, and the minimum homologous chain problem for $d$-chains in $(d+1)$-manifolds.


\section{Approximation algorithm and fixed-parameter tractability}\label{sec:FPT}
In this section, we describe approximation algorithms and parameterized algorithms for both minimum bounded chain and minimum homologous chain problems.
Our algorithms work with the dual graph of the input space.  
In order to simplify our presentation we assume that the dual graph of the input complex contains no loops. The following lemma shows that we can make this assumption without any loss of generality. The proof is in the appendix.

\begin{lemma}\label{preprocess}
In polynomial time we can preprocess an instance of the minimum bounded chain problem $(\K, C)$ into a new instance $(\K', C')$ such that (i) $(\K')^*$ contains no loops and (ii) an $\alpha$-approximation algorithm for $(\K', C')$ implies an $\alpha$-approximation algorithm for $(\K', C')$.
\end{lemma}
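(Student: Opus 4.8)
The plan is to remove loops in the dual graph by a purely local modification of $\K$ near the corresponding $d$-simplices, so that the combinatorics of the minimum bounded chain problem is preserved up to a controlled, multiplicative-preserving transformation. A loop in $\K^*$ at a vertex $V$ corresponds to a $d$-simplex $\sigma$ that lies in the closure of only one complementary region $V$, i.e.\ $\sigma \subseteq \Bd V$ with $V$ appearing on ``both sides'' of $\sigma$ (so $\sigma \in \shell(\K)$ with $V=v_\infty$, or $\sigma$ interior to $\K$ bounding $V$ with itself). First I would identify such simplices and subdivide each offending $d$-simplex $\sigma$: insert a new vertex at the barycenter of $\sigma$, replacing $\sigma$ by its $(d{+}1)$ ``cone'' sub-simplices, and cone off the boundary faces accordingly. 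This is the standard barycentric-type stellar subdivision restricted to $\sigma$; it is a local operation that produces an embedded complex $\K'$ (one can perform it geometrically inside a small ball around the barycenter without disturbing the rest of the embedding). Crucially, subdividing $\sigma$ splits the single region on each side, or rather leaves the region count unchanged but now the $d{+}1$ new simplices replacing $\sigma$ each separate two \emph{distinct} complementary regions — or, more carefully, we may need a double subdivision so that no single replacement simplex has the same region on both sides. I would verify that after (at most) two rounds of such local subdivision, $(\K')^*$ has no loops.

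Next I would set up the correspondence on chains. Subdividing $\sigma$ into $\sigma_1,\dots,\sigma_{d+1}$ induces an obvious map: a $d$-chain $X$ in $\K$ maps to the chain $X'$ in $\K'$ obtained by replacing each copy of a subdivided simplex by the formal sum of its pieces, and conversely a $d$-chain in $\K'$ that is ``consistent'' on each subdivided block (contains either all or none of the $\sigma_i$ for each $\sigma$) pushes back down. The key homological facts are: (1) subdivision does not change homology, so null-homologous cycles correspond to null-homologous cycles, hence the transformed instance $(\K',C')$ is feasible iff $(\K,C)$ is; (2) $\partial X' = C'$ iff $\partial X = C$ for consistent chains, because the barycenter vertex contributes $0$ to the boundary over $\Z_2$ exactly when an even number (here, all or none) of the incident $\sigma_i$ are present. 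For the approximation guarantee I would use weights: assign the new pieces $\sigma_1,\dots,\sigma_{d+1}$ weights summing to the original weight $w(\sigma)$ (e.g.\ split evenly), and assign all genuinely new $(d{-}1)$- and lower-dimensional simplices weight... — wait, only $d$-simplices carry cost, so I just need: any optimal or near-optimal solution to $(\K',C')$ can be taken to be block-consistent. This is the crux: I would argue that an inconsistent solution can always be repaired to a consistent one of no greater cost (if some but not all $\sigma_i$ are used, the boundary constraint at the barycenter forces a parity condition that lets us toggle the block to all-or-nothing without increasing weight, or we choose the subdivision weights so that the cheapest block-fixing move is free). Then $\opt(\K',C') = \opt(\K,C)$ exactly, an $\alpha$-approximate solution to $(\K',C')$ yields a block-consistent one of the same cost, which pushes down to an $\alpha$-approximate solution for $(\K,C)$.

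The main obstacle I expect is the block-consistency repair argument — showing that the boundary constraint at each inserted barycenter vertex, combined with a judicious choice of how to distribute the weight $w(\sigma)$ among $\sigma_1,\dots,\sigma_{d+1}$, forces (or allows cost-free conversion to) all-or-nothing solutions. A clean way around it: make the subdivision fine enough, or weight it so that the ``mixed'' configurations are provably suboptimal, or observe that since $\partial_d\partial_{d+1}=0$ the barycenter's incidence relations are rigid enough that feasibility alone forces consistency for a suitable subdivision scheme. A secondary technical point is checking that the local subdivision genuinely kills the loop (and does not create new ones): here I would trace through cycle/cut duality, noting that a loop at $V$ means $\sigma^*$ is a loop edge, and after subdivision the edges $\sigma_i^*$ connect $V$ to the newly created small region(s) inside the ball, which are distinct vertices of $(\K')^*$, so no loop remains; iterating handles any loop created in the process. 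The polynomial time bound is immediate since each subdivision adds $O(1)$ simplices per offending $d$-simplex and there are at most $n_d$ of them, with at most two rounds.
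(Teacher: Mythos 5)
Your approach has a fatal flaw at its very first step: stellar (barycentric) subdivision of a $d$-simplex $\sigma$ cannot remove the loop $\sigma^*$ from the dual graph. The vertices of $\K^*$ are the connected components of $\R^{d+1}\setminus\K$, and subdividing $\sigma$ changes only the simplicial structure of the $d$-dimensional set $|\K|$, not the underlying point set; hence $\R^{d+1}\setminus|\K'| = \R^{d+1}\setminus|\K|$ and no ``newly created small region inside the ball'' exists. Each piece $\sigma_i$ of the subdivision still has the same complementary region $V$ on both of its sides, so every $\sigma_i^*$ is again a loop at $V$. No amount of further subdivision helps, because a loop is a topological property of how $\sigma$ sits relative to the complement, which subdivision preserves. (A construction that genuinely creates a new region --- e.g.\ replacing $\sigma$ by two parallel copies enclosing a thin volume --- would be a different, and substantially more delicate, modification than what you describe, and you would then have to re-analyze which chains bound $C$.)

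The idea you are missing is the one the paper's proof turns on: by cycle/cut duality, a $d$-simplex dual to a loop can never lie on any $d$-cycle, so for any two chains $X,Y$ with $\partial X=\partial Y=C$ (whose sum is a cycle) it lies on both or on neither. Thus each such simplex is in \emph{every} bounded chain for $C$ or in \emph{none}; computing a single feasible solution $Y$ by linear algebra tells you which. One then deletes the ``never used'' simplices outright and strips out the ``always used'' set $F_{all}$, replacing $C$ by $C'=C\oplus\partial F_{all}$; adding the fixed set $F_{all}$ back to both the approximate and the optimal solutions can only improve the ratio, which gives (ii) without any of the block-consistency or weight-splitting machinery you were anticipating.
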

\begin{proof}
Let $F$ be the set of $d$-simplices corresponding to the loops in $\K^*$.
The cycle/cut duality implies that no $d$-simplex $f\in F$ can be on a $d$-cycle; as a loop cannot be on any cut. 
Therefore, for any $X, Y$ with boundary $C$, $f$ is either on both of them, or on none of them.  Thus, each $f \in F$ is either on all $d$-chains with boundary $C$, or none of such chains. 
Let $F_{all}\subseteq F$ be the $d$-simplices that are on all $X$ with boundary $C$, and let $F_{none}\subseteq F$ be the $d$-simplices that are on no $X$ with boundary $C$, we have $F_{all}\cup F_{none} = F$.

Now, we compute a feasible solution $Y$ with $\partial Y = C$ by solving the linear system using standard methods \cite{SolveLinear}.
Using $Y$ we can partition $F$ into $F_{all}$ and $F_{none}$: a $d$-simplex $f\in F$ is in $F_{all}$ if it is in $Y$, and in $F_{none}$ otherwise.  
We can remove $F_{none}$ from $\K$ without changing the optimal solution.  
Further, any chain $X$ that bounds $C$ contains all $F_{all}$.  That is, we can write $X = X' \oplus F_{all}$. It follows that 
\[
\partial X = \partial X' \oplus \partial F_{all} = C \Rightarrow \partial X' = C \oplus \partial F_{all} = C'.
\]
Hence, we can find the minimum chain $X'_{opt}$ in $\K' = \K \setminus \partial F_{all}$ that bounds $C'$.  Then, $X_{opt} = X'_{opt} \oplus F_{all}$ is the minimum bounded chain  for $C$ in $\K$.   
Furthermore, any approximation algorithm for $(\K', C')$ implies an approximation algorithm with the same ratio for $(\K, C)$.
To see that, let $X'_{apx}$ be an approximation to $X'_{opt}$ in $(\K', C)$, and let $X_{apx} = X'_{opt} \oplus F_{all}$.  So, we have:
\[
\frac{|X_{apx}|}{|X_{opt}|} = \frac{|X'_{apx} \oplus F_{all}|}{|X'_{opt} \oplus F_{all}|} = \frac{|X'_{apx} \cup F_{all}|}{|X'_{opt} \cup F_{all}|} = \frac{|X'_{apx}| + |F_{all}|}{|X'_{opt}| + |F_{all}|} \leq \frac{|X'_{apx}|}{|X'_{opt}|}.
\]
The second equality holds as $X'_{apx}$ and $X'_{opt}$ are disjoint from $F_{all}$; as they are solutions in $\K'$ that does not contain $F$. 
The last equality holds as $|X'_{apx}|$, $|X'_{apx}|$, and $|F_{all}|$ are non-negative and $|X'_{apx}| \geq |X'_{opt}|$.
\end{proof}

\subsection{Reductions to the minimum cut completion problem}
Given $G = (V,E)$ and $E'\subseteq E$, the \emph{minimum cut completion} problem asks for a cut $(S,\overline{S})$ with edge set $E_S$ that minimizes $|E_S\oplus E'|$.  
First, we show that the minimum cut completion problem generalizes the minimum bounded chain problem.

\begin{lemma}
\label{lem:new_cut_problem}
For any $d$-dimensional instance of the minimum bounded chain problem,  $(\K, C)$,
there exists an instance of the minimum cut completion problem $(G = (V,E), E')$ that can be computed in polynomial time, and a one-to-one correspondence between cuts in $G$ and $d$-chains with boundary $C$ in $\K$.  Moreover, if the cut $(S, \overline{S})$ with edge set $E_S$ in $G$ corresponds to the $d$-chain $Q$ in $\K$ then $|E_S \oplus E'| = |Q|$.  
\end{lemma}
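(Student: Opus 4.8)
The plan is to take the graph $G$ to be the dual graph $\K^*$ and to let $E'$ encode a single fixed solution of the bounded chain instance. First I would reuse the observation from the proof of \autoref{preprocess}: since $C$ is null-homologous, solving the $\Z_2$-linear system $\partial D = C$ by standard methods produces in polynomial time a $d$-chain $D_0$ with $\partial D_0 = C$. Then a $d$-chain $X$ satisfies $\partial X = C$ if and only if $\partial(X \oplus D_0) = 0$, i.e.\ $X \oplus D_0 \in Z_d(\K)$, and conversely $D_0 \oplus Z$ bounds $C$ for every cycle $Z$. Hence $X \mapsto X \oplus D_0$ is a bijection from $\{X : \partial X = C\}$ onto $Z_d(\K)$. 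I would set $G := \K^*$ (which we may assume is loop-free by \autoref{preprocess}, and which is connected, since a generic arc joining interior points of any two complementary regions crosses $\K$ only through interiors of $d$-simplices, so the regions it passes through trace a walk in $\K^*$), and $E' := D_0^*$; both $\K^*$ and $D_0$, hence $E'$, are computable in polynomial time.

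Next I would identify the cuts of $G$ with $Z_d(\K)$. Let $\theta$ be the map sending a $d$-simplex of $\K$ to its dual edge of $\K^*$; since this is a bijection between $d$-simplices and edges, $\theta$ extends to an injective $\Z_2$-linear map $C_d(\K) \to \Z_2^{E(G)}$. By cycle/cut duality $\theta$ sends $Z_d(\K)$ into the cut space of $G$. Moreover $\dim Z_d(\K) = \beta_d$ — because $\K$, being $d$-dimensional, has $B_d(\K) = \im\partial_{d+1} = 0$, so $H_d(\K) = Z_d(\K)$ — while the cut space of the connected graph $G$ has dimension $|V(G)| - 1 = \beta_d$ by Alexander duality. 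Therefore $\theta$ restricts to an isomorphism of $Z_d(\K)$ onto the cut space of $G$. Since $G$ is connected, $(S, \overline S) \mapsto E_S$ is in turn a bijection from the cuts of $G$ onto the cut space; composing, I obtain a bijection $\psi$ from the cuts of $G$ onto $Z_d(\K)$ with $\theta(\psi(S,\overline S)) = E_S$ for every cut.

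It then remains to assemble the correspondence. Define $\phi(S,\overline S) := D_0 \oplus \psi(S,\overline S)$; by the first paragraph, $\phi$ is a bijection from the cuts of $G$ onto the $d$-chains with boundary $C$. For the cost identity, put $Q := \phi(S,\overline S)$ and $Z := \psi(S,\overline S)$, so that $Q = D_0 \oplus Z$ and $E_S = \theta(Z)$; since $\theta$ is linear,
\[
E_S \oplus E' = \theta(Z) \oplus \theta(D_0) = \theta(Z \oplus D_0) = \theta(Q),
\]
and, $\theta$ being a bijection of underlying sets, $|E_S \oplus E'| = |\theta(Q)| = |Q|$. (In the weighted setting one assigns the weight of $\sigma$ to the edge $\sigma^*$ and measures $E_S \oplus E'$ with these weights; the same computation applies.)

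The main obstacle is the middle step: upgrading the stated cycle/cut duality (phrased for minimal $d$-cycles and minimal cuts) to a linear isomorphism between the entire cycle group $Z_d(\K)$ and the entire cut space of $\K^*$, and matching the cut space with the set of cuts $(S,\overline S)$ — this is exactly where connectedness and loop-freeness of $\K^*$ enter, and where the Alexander-duality count $|V(\K^*)| = \beta_d + 1$ is used. The remaining ingredients are routine: the translation by the fixed solution $D_0$, and the fact that dualizing $d$-simplices to edges is a bijection of sets, hence preserves cardinalities (and, with weights on the dual edges, weights) of symmetric differences.
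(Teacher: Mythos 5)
Your proposal is correct and follows essentially the same route as the paper: take $G = \K^*$, let $E'$ be the dual of a fixed solution $D_0$ computed by linear algebra, and use cycle/cut duality plus translation by $D_0$ to match cuts with chains bounding $C$ while preserving cardinality of symmetric differences. The only difference is cosmetic: the paper invokes cycle/cut duality as a black box, whereas you justify the identification of the full cycle group with the full cut space via the dimension count $\dim Z_d(\K) = \beta_d = |V(\K^*)| - 1$.
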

\begin{proof}
Let $F$ be any $d$-chain such that $\partial F = C$, such an $F$ can be computed in polynomial time, by solving the linear system.
In turn, let $G = \K^*$, and $E' = F^*$. 

Now, let $Q$ be any $d$-chain such that $\partial Q = C$.  So, $\partial(Q\oplus F) = 0$. Thus, by cycle/cut duality $Q\oplus F$ partitions $\R^{d+1}$, let $(S,\overline{S})$ be the corresponding dual cut in $\K^*$, and let $E_S$ be the edge set of this cut.  We have $|E_S\oplus E'| = |E_S \oplus F^*| = |Q^*| = |Q|$.

On the other hand, let $(S, \overline{S})$ be a cut in $\K^*$, with edge set $E_S$.  By cycle/cut duality $\partial E_S^* = 0$.  Now, let $Q = E_S^* \oplus F$.  It follows that $\partial Q = C$.  Moreover, we have $|Q| = |E_S^* \oplus F| = |E_S \oplus F^*| = |E_S^* \oplus E'|$.
\end{proof}

Next, we show via a similar argument that the cut completion problem also generalizes the minimum homologous chain problem when the input complex is a \emph{weak pseudomanifold}.
A weak pseudomanifold is a pure $d$-complex such that every $(d-1)$-simplex is a face of at most two $d$-simplices.
Weak pseudomanifolds generalize manifolds and the definition was first introduced by Dey et al. in~\cite{Dey2019}.
Although recognizing $d$-manifolds is undecidable~\cite{ManifoldUnrecognize}, weak pseudomanifolds can be recognized in polynomial time.

\begin{lemma}
\label{lem:new_cut_problem_min_hom_chain}
For any $d$-dimensional instance of the minimum homologous chain problem $(\M, D)$, where $\M$ is a weak pseudomanifold, there exists an instance of the minimum cut completion problem $(G = (V,E), E')$ that can be computed in polynomial time, and a one-to-one correspondence between cuts in $G$ and $d$-chains in $\M$ that are homologous to $D$.  Moreover, if the cut $(S, \overline{S})$ with edge set $E_S$ in $G$ corresponds to the $d$-chain $Q$ in $\K$ then $|E_S \oplus E'| = |Q|$.
\end{lemma}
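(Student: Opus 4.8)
The plan is to mirror the proof of Lemma~\ref{lem:new_cut_problem}, with the dual graph supplied by Alexander duality replaced by the dual graph of the weak pseudomanifold, and with the role of the feasible chain $F$ (satisfying $\partial F = C$ in that proof) played by the input chain $D$ itself, which is trivially homologous to $D$ via the zero $(d+1)$-chain.

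First I would build the dual graph $G$ of the $(d+1)$-dimensional weak pseudomanifold $\M$ (here $D$ is a $d$-chain, as in ``$d$-chains in $(d+1)$-manifolds''). Its vertex set has one vertex per $(d+1)$-simplex of $\M$ together with an extra vertex $v_\infty$; for each $d$-simplex $\sigma$ we add an edge joining the at most two $(d+1)$-simplices that have $\sigma$ as a facet, joining the unique such simplex to $v_\infty$ when $\sigma$ is a facet of only one $(d+1)$-simplex. The weak pseudomanifold condition makes this well defined, and purity guarantees every $d$-simplex is a facet of at least one $(d+1)$-simplex, so $d$-simplices of $\M$ correspond one-to-one with edges of $G$, and hence $d$-chains $Q$ correspond to edge subsets $Q^*$ with $|Q| = |Q^*|$. (If impure complexes were allowed, a $d$-simplex lying in no $(d+1)$-simplex could be removed by a preprocessing step as in Lemma~\ref{preprocess}.)

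The structural heart of the proof is the analog of cycle/cut duality for $G$. For a $(d+1)$-chain $V$, let $S$ be the set of vertices of $G$ dual to the $(d+1)$-simplices in $V$ (so $v_\infty \notin S$); then a $d$-simplex lies in $\partial V$ precisely when its dual edge has exactly one endpoint in $S$, i.e. $(\partial V)^* = E_S$. Conversely $E_S$ is dual to $\partial V$ for the $(d+1)$-chain $V$ supported on $S$. Hence, as sets of $d$-chains, $B_d(\M) = \im \partial_{d+1}$ is exactly $\{\, E_S^{\,*} : (S,\overline S) \text{ a cut of } G \,\}$, so $Q$ is homologous to $D$ if and only if $Q \oplus D = E_S^{\,*}$ for some cut $(S,\overline S)$ of $G$.

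With this in hand the reduction is the same as before: output the instance $(G, E')$ with $E' = D^*$. Given a cut $(S,\overline S)$, set $Q = E_S^{\,*} \oplus D$; then $Q$ is homologous to $D$ by the previous paragraph, and $|E_S \oplus E'| = |E_S \oplus D^*| = |(Q \oplus D)^* \oplus D^*| = |Q^*| = |Q|$. Conversely every $Q$ homologous to $D$ arises this way from the cut dual to $Q \oplus D$, and the map $(S,\overline S) \mapsto Q$ is injective because the edge-set/chain duality is, exactly as in Lemma~\ref{lem:new_cut_problem}; the whole construction is plainly polynomial time. I expect the only delicate point to be pinning down the dual graph and the cycle/cut correspondence carefully enough that the resulting map between cuts and homologous chains is genuinely one-to-one rather than many-to-one, but this is entirely parallel to the embedded case, with the elementary identity $(\partial V)^* = E_{\mathrm{supp}(V)}$ doing the work that Alexander duality does there, so no essentially new idea is required.
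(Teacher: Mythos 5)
Your proposal is correct and follows essentially the same route as the paper: the same dual graph (one vertex per $(d+1)$-simplex plus one extra ``boundary'' vertex, one edge per $d$-simplex), the same choice $E' = D^*$, and the same correspondence $Q = E_S^* \oplus D$ via the observation that boundaries of $(d+1)$-chains are exactly the duals of cut edge sets. Your explicit identity $(\partial V)^* = E_{\mathrm{supp}(V)}$ is just a spelled-out version of the ``cycle/cut duality'' step the paper invokes, so nothing is missing and nothing is genuinely different.
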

\begin{proof}
Let $(G=(V, E), E')$ be the instance of the cut completion problem constructed as follows.
For each $(d+1)$-simplex $v^*$ in $\M$, we have a vertex $v$ in $V$.
Moreover, $V$ contains one extra vertex, $v_B$.
For each $d$-simplex $e^*$ in $\M$, if $e^*$ is a face of two $(d+1)$-simplices $v^*$ and $u^*$, we add the edge $(u,v)$ to $E$; we call such an edge a regular edge.
Otherwise, when $e^*$ is only adjacent to one $(d+1)$-simplex $v^*$, we add the edge $(v, v_B)$ to $E$; we call such an edge a boundary edge.
Since $\M$ is a $(d+1)$-weak pseudomanifold, each $d$-simplex is a face of either one or two $(d+1)$-simplices.
Finally, let $E'$ be the edges dual to $D$.

Now, let $Q$ be a $d$-chain in $\M$ that is homologous to $D$. Therefore, $Q\oplus D$ is null-homologous, that is there exists a $d$-chain $S^*$, such that $\partial S^* = Q\oplus D$.  It follows that $(S, \overline{S})$ is a cut in $G$ with cost $|E_S\oplus E'| = |Q|$, as $|E'|$ is dual to $D$ and $E_S$ is dual to $|Q\oplus D|$.

On the other hand, let $(S, \overline{S})$ be a cut in $G$ with edge set $E_S$.  Thus, $E_S^*$ is null-homologous.  So, $Q = E_S^*\oplus D$ is homologous to $D$, and its cost is $|E_S\oplus E'|$.
\end{proof}
\subsection{Algorithms for the minimum cut completion problem}
We show an $O(\sqrt{\log |V|})$-approximation algorithm and a fixed-parameter tractable algorithm for the cut completion problem.  We obtain both of these results via reduction to 
\emph{2CNF Deletion}: given an instance of 2SAT, find the minimum number of clauses to delete to make the instance satisfiable.  
Agarwal et al.~\cite{Agarwal05MinUncut2CNFDeletionETC} show an $O(\sqrt{\log n})$-approximation algorithm for 2CNF Deletion, where $n$ is the number of clauses, and Razgon and O'Sullivan show that the problem is fixed-parameter tractable.

\begin{lemma}[Agarwal et al.\cite{Agarwal05MinUncut2CNFDeletionETC}, Theorem 3.1]
\label{lem:2CNFDeletionApprox}
There is a randomized polynomial-time algorithm for finding an $O(\sqrt{\log n})$-approximation for the minimum disagreement
2CNF Deletion problem.
\end{lemma}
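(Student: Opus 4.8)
The statement is quoted verbatim from Agarwal, Charikar, Makarychev, and Makarychev~\cite{Agarwal05MinUncut2CNFDeletionETC}, so a complete proof amounts to invoking their Theorem~3.1; for context we sketch the route, which is the one we would take.

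The plan is to relax the problem to a semidefinite program with $\ell_2^2$ triangle inequalities and then round the solution via the structure theorem of Arora, Rao, and Vazirani. First I would introduce one vertex for every literal, so that a variable $x$ and its negation $\bar x$ become distinct vertices, and regard a truth assignment as a partition of the literal vertices into a ``true'' side and a ``false'' side with $x$ and $\bar x$ forced onto opposite sides. A clause $(a \vee b)$ is deleted precisely when both $a$ and $b$ land on the false side; this is a covering/deletion cut objective of exactly the kind handled in~\cite{Agarwal05MinUncut2CNFDeletionETC}, and it generalizes Min Uncut.

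Next I would write the SDP: a unit vector $v_\ell$ for each literal, a fixed vector $v_0$ marking the ``true'' side, the antipodality constraint $\|v_x + v_{\bar x}\|^2 = 0$, and the triangle inequalities $\|v_a - v_b\|^2 \le \|v_a - v_c\|^2 + \|v_c - v_b\|^2$ over all triples (including $v_0$). The number of deleted clauses is a nonnegative linear combination of quantities $\|v_a - v_b\|^2$, so the SDP optimum lower bounds the true optimum and is computable in polynomial time. Since the unit vectors equipped with squared Euclidean distance form a metric of negative type satisfying the triangle inequality, the ARV structure theorem applies: if the solution is not concentrated, there are two sets of literal vertices, each of linear size, whose mutual squared distance is $\Omega(1/\sqrt{\log n})$. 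Feeding this separation into the standard random-projection and region-growing rounding (carried out in~\cite{Agarwal05MinUncut2CNFDeletionETC}) yields an integral assignment of cost within an $O(\sqrt{\log n})$ factor of the SDP value, hence within $O(\sqrt{\log n})$ of the optimum; a constant number of independent repetitions boosts the success probability, giving the claimed randomized polynomial-time algorithm.

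The only genuinely hard step is this rounding: showing that a non-concentrated negative-type metric on polynomially many points contains two linear-size subsets that are $\Omega(1/\sqrt{\log n})$-separated. That is precisely the ARV structure theorem, established by a concentration-of-measure argument on a random projection of the vectors; I would invoke it as a black box, exactly as Agarwal et al. do, rather than reprove it.
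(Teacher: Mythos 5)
Your proposal is correct and matches the paper, which states this lemma purely as a citation of Agarwal et al.'s Theorem 3.1 and offers no proof of its own; invoking that theorem as a black box is exactly what is intended. Your sketch of the underlying SDP-with-triangle-inequalities plus ARV-rounding argument is an accurate summary of the cited work, but it is supplementary context rather than something the paper requires.
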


\begin{lemma}[Razgon and O'Sullivan~\cite{Razgon09Almost2SatFpt}, Theorem 7]
\label{lem:2CNFDeletionFPT}
Let $B$ be an instance of 2CNF Deletion problem with $m$ clauses that admits a solution of size $k$.  There is an $O(15^k\cdot k \cdot m^3)$ time exact algorithm for solving $B$.
\end{lemma}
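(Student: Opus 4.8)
The plan is to prove the statement by the \emph{iterative compression} method of Reed–Smith–Vetta adapted to clause deletion, reducing the whole problem to a ``disjoint compression'' subroutine that is in turn solved by branching controlled by an important–separators bound. First I would set up the compression scheme: order the $m$ clauses as $c_1,\dots,c_m$, write $F_i=\{c_1,\dots,c_i\}$, and maintain for increasing $i$ a deletion set $X_i\subseteq F_i$ with $|X_i|\le k$ and $F_i\setminus X_i$ satisfiable, reporting infeasibility the first time a stage fails. The only nontrivial move is the update from $X_{i-1}$: the set $W=X_{i-1}\cup\{c_i\}$ has size at most $k+1$ and $F_i\setminus W$ is satisfiable, and we must ``compress'' $W$ to a solution of size $k$. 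To do this I would guess the clauses of $W$ retained by the new solution: for each $W_0\subseteq W$ (there are $2^{k+1}$ of them) commit $W_0$ to the solution and look for $Z$ with $Z\cap W=\emptyset$, $|Z|\le k-|W_0|$, and $(F_i\setminus W_0)\setminus Z$ satisfiable. Since $(F_i\setminus W_0)\setminus(W\setminus W_0)=F_i\setminus W$ is already satisfiable, this is a \emph{disjoint compression} instance: delete a few clauses avoiding a forbidden set, given that the forbidden set can already be ``legally'' deleted.

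The heart of the argument is the disjoint-compression routine. I would reformulate 2‑SAT satisfiability through the implication digraph $G_F$ on the $2n$ literals with arcs $\bar a\to b$ and $\bar b\to a$ for each clause $a\lor b$, where deleting a clause deletes its two arcs; $F\setminus Z$ is satisfiable iff no variable $x$ has $x$ and $\bar x$ in a common strongly connected component of $G_{F\setminus Z}$. Fixing a satisfying assignment $\phi$ of $F$ minus the forbidden set pins a preferred literal per variable, and one shows a minimum disjoint solution corresponds to a minimum set of deletable arcs whose removal destroys every ``bad'' closed walk — i.e., a separation problem in $G_F$ with some arcs undeletable. The engine is then a bounded-depth branching procedure: while some variable is forced both true and false, locate a \emph{closest} minimal cut consisting of deletable clauses that resolves this conflict, and branch over that cut (either put one of its clauses into $Z$, decreasing the budget, or forbid it and recurse). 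The number of leaves is bounded by the number of important separators of size at most the remaining budget, which is at most $4^{\,\ell}$ up to lower-order factors, and each node does only min-cut work.

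For the accounting, summing the disjoint-compression cost over the $2^{k+1}$ intersection guesses gives, per stage of iterative compression, a bound of the shape $\sum_{j=0}^{k+1}\binom{k+1}{j}\,c^{\,k-j}\,\mathrm{poly}(m)$ for the branching base $c$, which collapses to a single exponential in $k$ with a constant base; a careful version of this count — tracking the $O(k)$-depth branching tree and the cost $O(km)$ of each bounded-size min-cut in a graph with $O(m)$ arcs — yields the claimed base $15$ and the factors $k$ and $m^2$. Multiplying by the $m$ stages of iterative compression gives the stated $O(15^k\cdot k\cdot m^3)$ (with one $m$ from the stages and the remaining polynomial from repeated flow computations inside the compression calls).

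I expect the disjoint-compression step to be the main obstacle on two counts. The first is getting the right reformulation so that ``make $F\setminus Z$ satisfiable'' becomes a clean separation statement in $G_F$ with undeletable arcs, including the correct handling of variables whose forced value flips. The second, and harder, is proving the branching terminates with only $f(\ell)$ leaves: one must argue that it always suffices to branch over a \emph{bounded-size} obstruction — a closest minimal cut / important separator — rather than an arbitrary conflict, and that the forced-propagation phase never increases the parameter. By contrast the iterative-compression wrapper, the $2^{k+1}$ enumeration of intersections, and the polynomial bookkeeping are routine once this subroutine is in place.
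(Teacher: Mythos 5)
The paper does not prove this lemma at all: it is imported verbatim as Theorem~7 of Razgon and O'Sullivan, so there is no internal argument to compare yours against. Judged on its own terms, your outline correctly identifies the architecture of the known proof --- iterative compression over the clauses, a $2^{k+1}$ enumeration of the intersection with the previous solution plus one new clause, and a disjoint-compression subroutine phrased as a separation problem in the implication digraph with some arcs undeletable. That wrapper is indeed routine, and your accounting of how a $c^{\ell}\cdot\mathrm{poly}(m)$ disjoint-compression bound folds into a single-exponential total via $\sum_j \binom{k+1}{j}c^{k-j}$ is the right shape.

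However, the proposal is a plan rather than a proof, and the gap sits exactly where you place it. The entire content of the cited theorem is (a) the correct reformulation of ``make $F\setminus Z$ satisfiable while avoiding $W$'' as a cut problem in $G_F$ relative to a fixed satisfying assignment $\phi$, including the treatment of variables whose value must flip, and (b) the proof that branching on a closest/important separator terminates with a bounded number of leaves and with a recurrence whose solution is the specific base $15$. You assert that ``a careful version of this count \ldots yields the claimed base $15$ and the factors $k$ and $m^2$,'' but no recurrence is written down, so the quantitative claim --- which is the statement being proved, since the lemma is precisely a running-time bound --- is not established. An appeal to a generic $4^{\ell}$ important-separators bound does not obviously produce $15^k$ after convolving with the $\binom{k+1}{j}$ enumeration, and Razgon--O'Sullivan in fact obtain their constant from a bespoke branching analysis of their annotated subproblem, not from the important-separator machinery (which postdates their paper in this form). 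To close the gap you would need to either reproduce their branching rule and solve its recurrence, or replace it with an important-separator argument and honestly recompute the resulting base, which would likely differ from $15$.
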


The next lemma shows similar results for the cut completion problem.

\begin{lemma}
\label{lem:cutCompletionAlgs}
For the cut completion problem $(G = (V,E), E')$,
\begin{enumerate}
\item [(i)] there is a randomized polynomial-time $O(\sqrt{\log |V|})$-approximation algorithm, and 
\item [(ii)] there is an $O(15^k\cdot k \cdot |E|^3)$ time exact algorithm, where $k$ is the size of the optimal solution.
\end{enumerate}
\end{lemma}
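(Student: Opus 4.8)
The plan is to reduce the cut completion problem to 2CNF Deletion and then invoke Lemmas~\ref{lem:2CNFDeletionApprox} and~\ref{lem:2CNFDeletionFPT}. Given an instance $(G = (V,E), E')$, I would introduce a Boolean variable $x_v$ for each vertex $v \in V$, with the intended meaning that $x_v$ is \emph{true} iff $v \in S$. A cut $(S,\overline{S})$ then corresponds exactly to a truth assignment, and an edge $e = (u,w)$ is \emph{in} the cut $E_S$ iff $x_u \neq x_w$. The goal is to encode, for each edge, the constraint that its membership in the cut matches its membership in $E'$: for $e \in E'$ we want $e$ in the cut, i.e.\ $x_u \neq x_w$; for $e \notin E'$ we want $e$ not in the cut, i.e.\ $x_u = x_w$. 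Each such ``equality'' or ``inequality'' constraint on two variables is expressible as a conjunction of two 2-clauses (e.g.\ $x_u = x_w$ is $(x_u \vee \neg x_w) \wedge (\neg x_u \vee x_w)$, and $x_u \neq x_w$ is $(x_u \vee x_w) \wedge (\neg x_u \vee \neg x_w)$).

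The subtlety is the weighting: violating a single edge constraint should cost exactly $1$ toward $|E_S \oplus E'|$, but a naive encoding with two clauses per edge would let the solver satisfy one of the two clauses and violate only the other, and more importantly an unsatisfied pair should count once, not twice. I would handle this by the standard device of making each of the two clauses for a given edge appear with multiplicity, or rather by observing that for a fixed assignment an edge constraint is either fully satisfied (both clauses true) or ``half-violated'' in a controlled way: if $x_u = x_w$ then exactly one of $(x_u \vee x_w)$, $(\neg x_u \vee \neg x_w)$ is false, and if $x_u \neq x_w$ then exactly one of $(x_u \vee \neg x_w)$, $(\neg x_u \vee x_w)$ is false. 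Concretely, for each edge $e=(u,w)$ I include the two clauses of the constraint we \emph{want} ($x_u = x_w$ if $e \notin E'$, $x_u \neq x_w$ if $e \in E'$); then a given assignment satisfies both of these clauses iff the edge agrees with $E'$, and otherwise it falsifies exactly one of the two. So the number of falsified clauses equals $|E_S \oplus E'|$ exactly, and the minimum number of clauses to delete to make the 2SAT instance satisfiable equals $\min_S |E_S \oplus E'|$. (One should double-check the degenerate case of parallel edges in $G$: each parallel copy simply contributes its own pair of clauses, which is consistent since each dual $d$-simplex is counted separately in $|E_S \oplus E'|$.)

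Having established this correspondence, the lemma follows directly. The 2CNF Deletion instance has $m = 2|E|$ clauses and $|V|$ variables. For part~(i), Lemma~\ref{lem:2CNFDeletionApprox} gives a randomized polynomial-time $O(\sqrt{\log m}) = O(\sqrt{\log |E|})$-approximation; since $|E| \le |V|^2$ we have $\sqrt{\log |E|} = O(\sqrt{\log |V|})$, and recovering the cut from the assignment and its cost bound is immediate from the exact correspondence between deleted clauses and $|E_S \oplus E'|$. For part~(ii), if the optimal cut completion has cost $k$, then the 2CNF Deletion instance admits a solution of size $k$, so Lemma~\ref{lem:2CNFDeletionFPT} solves it in $O(15^k \cdot k \cdot m^3) = O(15^k \cdot k \cdot |E|^3)$ time, and we read off the optimal cut from the satisfying assignment of the reduced instance.

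The main obstacle I anticipate is getting the cost bookkeeping exactly right — specifically, verifying that an unsatisfied edge constraint falsifies precisely one clause (never zero, never two) under every assignment, so that the clause-deletion objective matches $|E_S \oplus E'|$ on the nose rather than up to a factor of two. This requires the small case analysis on the two possible local assignments to $(x_u, x_w)$ for each of the two constraint types, but it is routine once set up correctly. A secondary point to be careful about is that the approximation guarantee of Lemma~\ref{lem:2CNFDeletionApprox} is in terms of the number of clauses, so one must translate $\log(2|E|)$ to $\log |V|$ using $|E| = O(|V|^2)$; this is harmless but should be stated.
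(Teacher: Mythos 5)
Your proposal is correct and is essentially identical to the paper's proof: the same variable-per-vertex encoding, the same two clauses per edge (equality gadget for $e \notin E'$, inequality gadget for $e \in E'$), the same case analysis showing a violated edge constraint falsifies exactly one clause, and the same invocation of Lemmas~\ref{lem:2CNFDeletionApprox} and~\ref{lem:2CNFDeletionFPT}. Your explicit remark translating the approximation ratio from the number of clauses to $O(\sqrt{\log|V|})$ is a small point the paper glosses over, but it does not change the argument.
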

\begin{proof}
Let $G=(V,E)$, and $E'\subseteq E$.  We show a 2CNF Deletion instance $B_G$ such that for any cut $(S, \overline{S})$ with edge set $E_S$, the number of unsatisfied clauses in $B_G$ is exactly $|E_S\oplus E'|$.  The statement of the lemma will follow from Lemma~\ref{lem:2CNFDeletionApprox} and \ref{lem:2CNFDeletionFPT}.

Let $B_G$ be the instance of the 2CNF Deletion problem defined on $G$ as follows:
\begin{itemize}
\item  For each vertex $v\in V$, we have variable $b(v)$.  
\item For each edge $(u,v)\in E$: 
	\begin{itemize}
		\item if $(u,v) \in E'$, we add $b(u)\vee b(v)$ and $\neg b(u)\vee \neg b(v)$ to $B$, and 
		\item if $(u,v)\notin E'$, we add $b(u)\vee \neg b(v)$ and $\neg b(u) \vee b(v)$ to $B$.
	\end{itemize}
(Note that in both cases, any assignment of $b(u)$ and $b(v)$ satisfies at least one of the clauses.  Again in both cases, assignments exist that satisfy both clauses.)
\end{itemize}

Let $(S, \overline{S})$ be a cut with edge set $E_S$.   Let $b_S$ be the natural boolean vector that corresponds to the cut: $b(v) = [v\in S]$ for all $v\in V$.  
We show that $|E_S\oplus E'|$ is equal to the number of clauses that are not satisfied in $B_G$.  
Specifically, we show (I) for each edge $(u,v)\in E_S\oplus E'$, exactly one of its corresponding clauses is satisfied, and (II) for each edge $(u,v)\notin E_S\oplus E'$ both of its corresponding clauses are satisfied.  

If $(u,v)\in E_S\oplus F$ there are two cases to consider: (I.1) $(u,v)\in E_S$ and $(u,v) \notin E'$, that is $b(u) \neq b(v)$ and the corresponding clauses are $b(u)\vee \neg b(v)$ and $\neg b(u) \vee b(v)$. Exactly one of the clauses is satisfied.  (I.2) $(u,v)\notin E_S$ and $(u,v) \in E'$, that is $b(u) = b(v)$, and the corresponding clauses are $b(u)\vee b(v)$ and $\neg b(u)\vee \neg b(v)$; exactly one of the clauses is satisfied.

If $(u,v)\notin E_S\oplus E'$ there are two cases to consider: (II.1) $(u,v)\in E_S$ and $(u,v) \in E'$, that is $b(u) \neq b(v)$ and the corresponding clauses are $b(u)\vee b(v)$ and $\neg b(u)\vee \neg b(v)$. Both of the clauses are satisfied.  (II.2) $(u,v)\notin E_S$ and $(u,v) \notin E'$, that is $b(u) = b(v)$, and the corresponding clauses are $b(u)\vee \neg b(v)$ and $\neg b(u) \vee b(v)$. Both of the clauses are satisfied.
\end{proof}

\subsection{Wrap up (Proofs of Theorems~\ref{thm:MBC_apx},  \ref{thm:MBC_fpt}, \ref{thm:MHC_apx}, and \ref{thm:MHC_fpt})}
Lemma~\ref{lem:new_cut_problem} and Lemma~\ref{lem:new_cut_problem_min_hom_chain} show that the bounded chain problem and the minimum homologous chain problem are special cases of the cut completion problem, and Lemma~\ref{lem:cutCompletionAlgs} shows that we obtain $O(\sqrt{\log |V|})$-approximation algorithm and $O(15^k\cdot k \cdot |E|^3)$ time exact algorithm for the cut completion problem.  The number of vertices $|V|$ translates to $\beta_d$ for simplicial complexes embedded in $\R^{d+1}$ (Theorem ~\ref{thm:MBC_apx}), and $n_{d+1}$, the number of $(d+1)$-dimensional simplices for $(d+1)$-manifolds (Theorem~\ref{thm:MHC_apx}).
The number of edges $|E|$ translates to $n_d$ in both simplicial complexes embedded in $\R^{d+1}$ (Theorem~\ref{thm:MBC_fpt}) $(d+1)$-manifolds (Theorem~\ref{thm:MHC_fpt}).

\section{Hardness of approximation}\label{sec:hardness}

In this section, we show it is unlikely that either of the minimum bounded chain or minimum homologous chain problems admit constant factor approximation algorithms, even for their low dimensional instances.  Our hardness results follow from reductions from the minimum cut completion problem, defined in the previous section.

\subsection{Minimum bounded chain to minimum cut completion} 
We show that the minimum cut completion problem reduces to a $2$-dimensional instance of the minimum bounded chain problem $(\K, C)$, where $\shell(\K)$ is in fact a manifold and $C$ is a (possibly not connected) cycle on $\shell(\K)$.  Our hardness of approximation result for the minimum bounded chain problem is based on this reduction.

\begin{lemma}
\label{lem:cut_completion_to_bounding_chain}
Let $(G = (V,E), E')$ be any instance of the minimum cut completion problem. 
There exists an instance of the $2$-dimensional minimum bounded chain problem $(\K, C)$ with $C$ on the outer shell of $\K$ that can be computed in polynomial time, and a one-to-one correspondence between cuts in $G$ and $2$-chains with boundary $C$ in $\K$.  Moreover, if the cut $(S, \overline{S})$ with edge set $E_S$ in $G$ corresponds to the $2$-chain $Q$ in $\K$ then 
\[
\frac{|Q|}{\tau} - 1 \leq |E_S \oplus E'| \leq \frac{|Q|}{\tau},
\]
where $\tau = 58m + 2$ and $m$ is the number of edges in $G$.
\end{lemma}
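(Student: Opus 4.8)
The plan is to build the complex $\K$ from the graph $G$ by realizing each edge of $G$ as a "tube" (a topological cylinder triangulated with a constant number of triangles) and each vertex of $G$ as a small region, gluing everything so that the whole construction sits as a $2$-complex in $\R^3$ with all the "interesting" triangles on the outer shell. Concretely, I would first take a planar-ish drawing of $G$ (with crossings allowed, resolved by small local gadgets) and thicken it: each vertex $v$ becomes a little triangulated disk, and each edge $(u,v)$ becomes a triangulated annulus/cylinder connecting the disks of $u$ and $v$. The key design choice is that each cylinder has two "sides" whose triangulation is made of a large but fixed number $\tau-1$ of triangles arranged so that to bound a curve running along the cylinder one must either include roughly $0$ triangles (if the two endpoints are on the "same side" of the cut) or roughly $\tau$ triangles (if they are "split"), with a possible additive error of $1$ coming from the junction at a vertex-disk. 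The number $\tau = 58m+2$ and the "$58$" will come out of counting exactly how many triangles are in each edge-gadget, and the "$+2$" from the vertex/junction pieces; these are the routine calculations I would not carry out here.

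Next, I would define the input cycle $C$: on the outer shell of each edge-gadget I place a pair of boundary curves so that a $2$-chain $Q$ with $\partial Q = C$ is forced, edge-by-edge, to make a binary choice — "fill the top" or "fill the bottom" of that tube — and this choice is exactly the choice of which side of the cut $(S,\overline S)$ the two endpoints lie on relative to that edge. More precisely, for each vertex $v$ of $G$ there will be a "phase" bit $b(v) \in \{0,1\}$ recording which way the vertex-disk of $v$ is filled, and for the edge-gadget of $(u,v)$ the chain must fill the short way (cost $O(1)$) when $b(u) = b(v)$ and the long way (cost $\tau - O(1)$) when $b(u) \ne b(v)$. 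Choosing which of the two cases is "cheap" according to whether $(u,v) \in E'$ (i.e. flipping the gadget for edges in $E'$) makes the total cost of $Q$ equal to $\tau \cdot |E_S \oplus E'|$ plus a bounded-per-edge additive term, which is where the inequality $|Q|/\tau - 1 \le |E_S \oplus E'| \le |Q|/\tau$ comes from. I then need to verify the one-to-one correspondence: every cut gives such a $Q$ (fill each gadget according to the phase bits), and conversely every $Q$ with $\partial Q = C$, because the embedding in $\R^3$ together with cycle/cut duality (\autoref{lem:new_cut_problem}) constrains it, must be of this canonical form up to the bounded slack — so the $2$-chains with boundary $C$ are exactly the cuts of $G$.

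The main obstacle I anticipate is the converse direction of the correspondence: showing that an \emph{arbitrary} $2$-chain bounding $C$ must decompose as a union of per-edge canonical fillings rather than doing something globally clever (e.g. routing through several gadgets, or exploiting the self-intersection-free embedding to save triangles). The embeddedness in $\R^3$ is exactly what rules this out — by Alexander duality $\R^3 \setminus \K$ has $\beta_2 + 1$ components and a $2$-chain bounding $C$ corresponds to a cut in the dual graph $\K^*$, so I must engineer $\K^*$ (equivalently, the way the gadgets are glued into $\R^3$) so that its cut structure is precisely (a blown-up copy of) the cut structure of $G$, up to the additive-$1$-per-vertex slack. Getting the vertex-disk gadgets right — so that they force a single coherent phase bit $b(v)$ shared by all incident edge-gadgets, contribute only $O(1)$ to the cost, and do not introduce spurious dual components — is the delicate part, and verifying it will require carefully tracking the connected components of $\R^3 \setminus \K$ through the gluing. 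Everything else (the triangle counts giving $\tau = 58m+2$, checking $\partial Q = C$ for the canonical $Q$, the arithmetic of the final inequality) is bookkeeping.
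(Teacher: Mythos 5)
Your overall strategy matches the paper's: thicken an embedding of $G$ in $\R^3$ into a complex whose dual graph is essentially $G$, make the per-edge gadgets heavy via fine triangulation so that their cost dominates everything else, derive $C$ from $E'$, and use Alexander/cycle--cut duality for the converse direction. You also correctly identify the crux, namely that an arbitrary chain bounding $C$ must be forced into a canonical per-edge form by the embedding. However, the gadget you sketch is not the one that makes this work, and the difference sits exactly at the point you defer as ``delicate.''

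First, your vertex gadgets are disks and your edge gadgets are annuli, i.e.\ a two-dimensional ribbon thickening. A surface with boundary does not separate $\R^3$, so $\R^3\setminus\K$ would not acquire one component per vertex of $G$, and the cycle/cut duality you invoke for the converse has nothing to bite on. The paper instead takes the \emph{closed} boundary surface of a solid thickening (a cube per vertex, a closed tube per edge) and inserts a transversal disk (an ``edge square'') in the middle of each tube; these disks chop the enclosed solid into one \emph{volume per vertex of $G$}, so the dual graph of $\K$ is literally $G$ plus one outer vertex, and every $2$-cycle of $\K$ is dual to a cut of that graph. Second, your cost mechanism (fill the top of the cylinder for cost $O(1)$ or the bottom for cost $\tau-O(1)$ according to whether the phase bits agree) is not what is needed: the binary choice that carries the cost is whether the chain \emph{contains the transversal edge square or not}. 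Setting $B$ to be the edge squares of $E'$ and $C=\partial B$, any chain $Q$ with $\partial Q=C$ contains each edge square entirely or not at all; $Q\oplus B$ is a $2$-cycle whose complement assigns each vertex volume a side (this \emph{is} your phase bit $b(v)$, obtained structurally rather than by a coherence argument over incident gadgets); and the edge squares in $Q$ are exactly those dual to $E_S\oplus E'$. Each edge square is triangulated into $\tau=58m+2$ triangles while all remaining triangles number at most $58m<\tau$, which yields the stated two-sided bound. As written, your proposal leaves precisely this mechanism unspecified, so neither the one-to-one correspondence nor the inequality is yet established.
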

\begin{proof}
Our construction is simple in high-level.  We start from any embedding of $G$ in $\R^3$, and we thicken it to obtain a space, in which each edge corresponds to a tube.  We insert a disk in the middle of each tube; we call these disks \emph{edge disks}.
Then we triangulate all of the $2$-dimensional pieces.
The dual of the complex that we build is almost $G$, except for one extra vertex corresponding to its outer volume, and a set of extra edges, all incident to the extra vertex.
We give our detailed construction below.

We consider the following piecewise linear embedding of $G$ in $\R^3$; let $n$ and $m$ be the number of vertices and edges of $G$, respectively.  
First, map the vertices of $G$ into $\{1, 2, \ldots, n\}$ on the $x$-axis.  Now, consider $m+2$ planes $h_0, h_1, \ldots, h_{m+1}$ all containing the $x$-axis with normals being evenly spaced vectors ranging from $(0, 1, 1)$ to $(0, 1, -1)$.  We use $h_1, \ldots, h_m$ for drawing the edges $G$.  We arbitrarily assign edges of $G$ to these plane, so each plane will contain exactly one edge.  Each edge is drawn on its plane as a three-segment curve; the first and the last segment are orthogonal to $x$-axis and the middle one is parallel.  All edges are drawn in the upper half-space of $\R^3$.  See Figure~\ref{fig:thickend}, left. 

Next, we place an axis parallel cube around each vertex.  The size of the cubes must be so that they do not intersect, fix the width of each cube to be $1/10$. We refer to these cubes as \emph{vertex cubes}
Then, we replace the part of each edge outside the cubes with a cubical tube, called \emph{edge tube}.  
We choose the thickness of these tubes sufficiently small so that they are disjoint.  
We also puncture the cubes so that the union of all vertex cubes and edges tubes form a surface; see Figure~\ref{fig:thickend}, left.  (This surface will have genus $m-n+1$ by Euler's formula, which is the dimension of the cycle space of $G$)

\begin{figure}[!htb]
  \centering
    \includegraphics[height=2.0in]{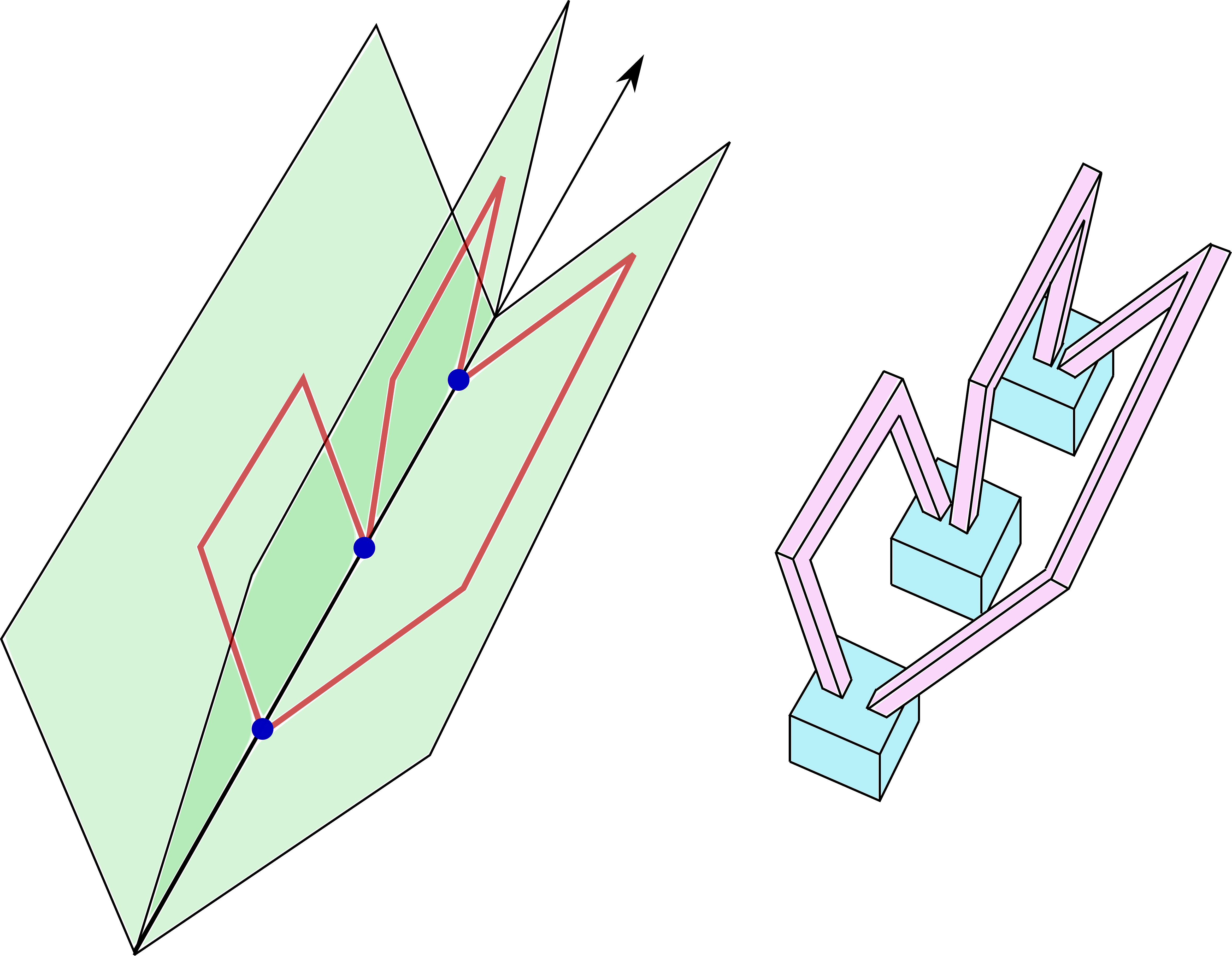} \qquad\qquad\qquad\qquad
    \includegraphics[height=2.0in]{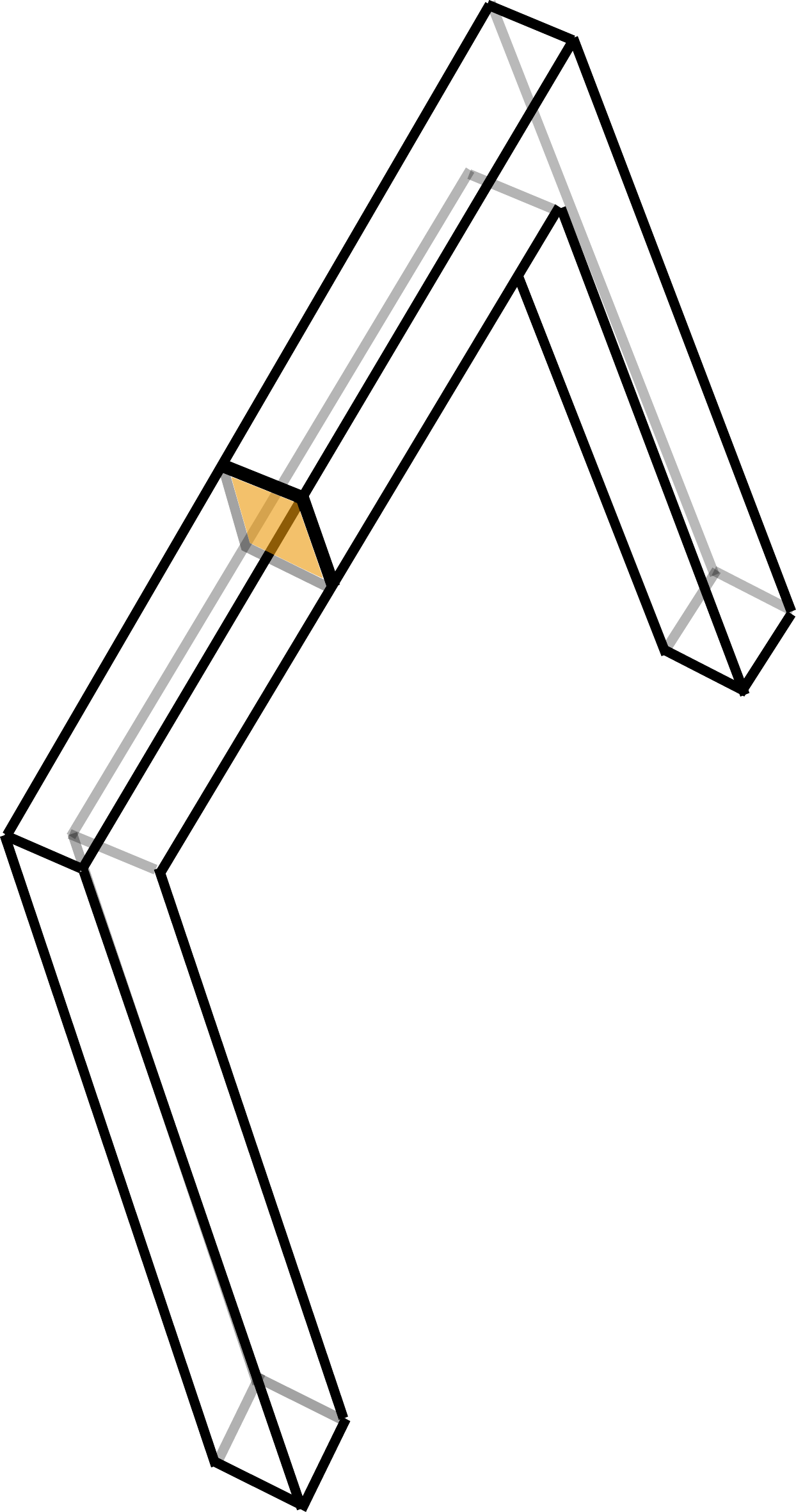}
  \caption{Left: an embedding of $K_3$ in $\R^3$, and the thickened surface composed of blue vertex cubes and pink edge tubes, right: an edge tube subdivided by an edge square.}  
  \label{fig:thickend}
\end{figure}

Next, we subdivide each tube by placing a square in its middle; see Figure~\ref{fig:thickend}, right.  We refer to these squares as \emph{edge squares}.  Edge squares partition the inside of the surface into $n$ volumes.  We observe that each of these volumes contains exactly one vertex of the drawing of $G$, thus, we call them \emph{vertex volumes}.

For our reduction to work, we need that the weight of each $2$-cycle to be dominated by the weight of its edge squares.
To achieve that we finely triangulate each edge square.
For an edge tube, we first subdivide its surface to $16$ quadrangles as shown in Figure~\ref{fig:triangulation}, left.
Then, we obtain a triangulation with $32$ triangles by splitting each quadrangle into two triangles.
For a vertex cube, note that all the punctures are on the top face by our construction.
We split all the other faces by dividing each of them into two triangles.
For the top face, we can obtain a triangulation in polynomial time; this triangulation will have $4\deg(v)+8$ triangles by Euler's formula, where $\deg(v)$ is the degree of the vertex corresponding to the cube.
Therefore, the triangulation of each vertex cube will have $4\deg(v)+18$ triangles, see Figure~\ref{fig:triangulation}, right.
Therefore, there are $\left( \sum_{v \in V}{4\deg(v)+18} \right) + 32m \leq 58m$ triangles that are not part of edge squares.
Finally, we triangulate each edge square into $58m + 2$ triangles so that the cost of one edge square is greater than the sum of all triangles not contained in edge squares. 
This triangulation can be done efficiently by subdividing triangles. The subdivision is performed by inserting a vertex into the interior of the triangle and connecting it with an edge to each vertex on the boundary of the triangle. The result is a new complex, homeomorphic to the original, with two additional triangles.
Overall, our complex $\K$ has $O(m^2)$ triangles. 

\begin{figure}[!htb]
  \centering
    \includegraphics[height=1.5in]{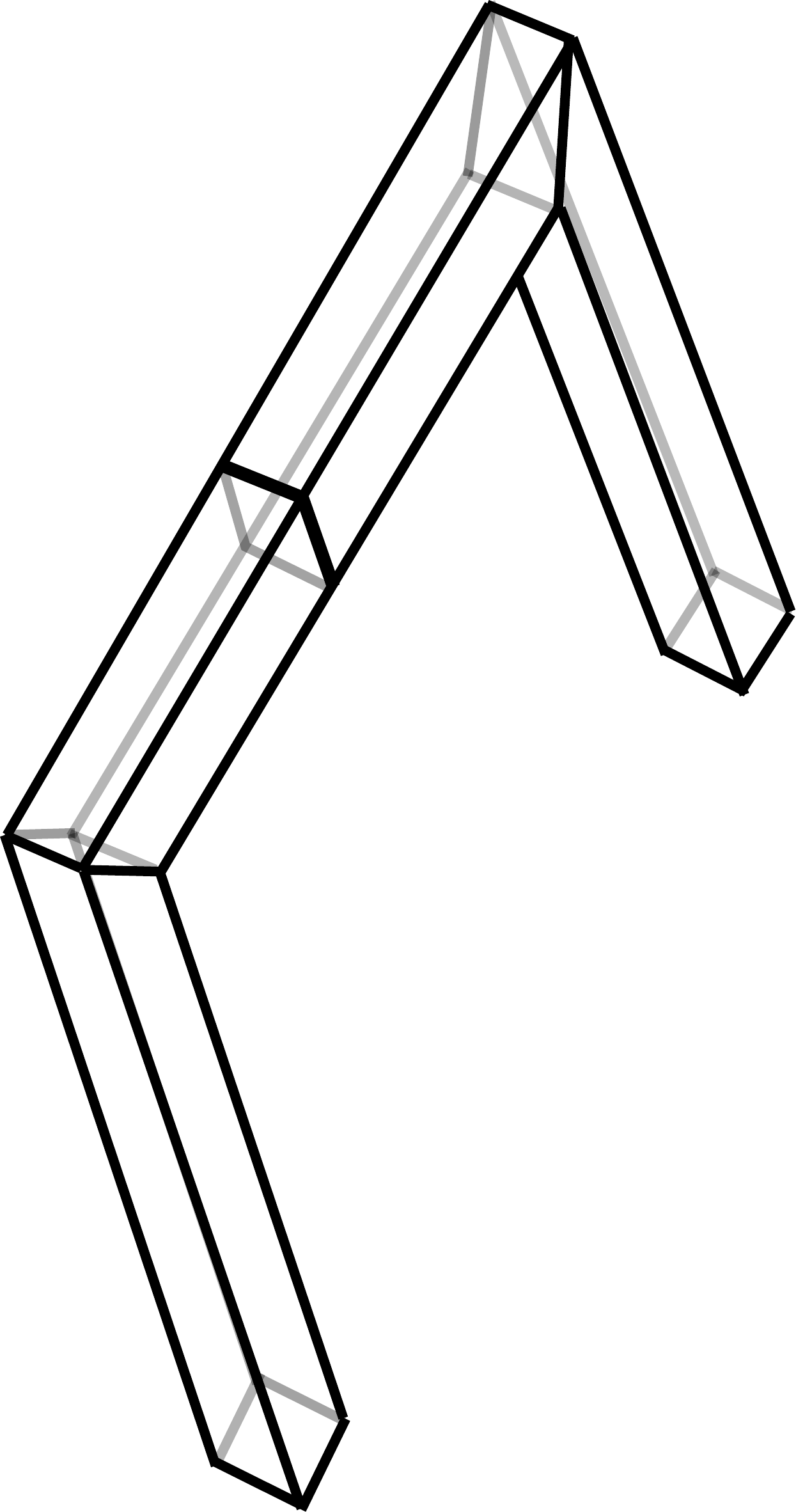} \qquad\qquad\qquad
    \includegraphics[height=1.5in]{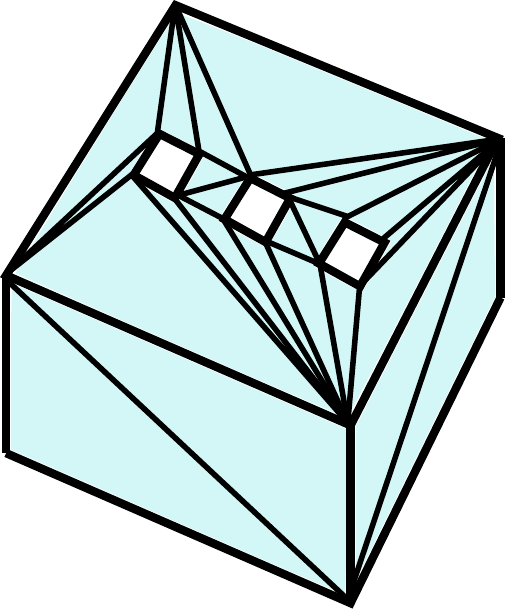}
  \caption{Left: subdividing the surface of an edge-tube to quadrangles, right: triangulating the surface of a vertex cube.}  
  \label{fig:triangulation}
\end{figure}

We are now done with the construction of $\K$.  Let $B$ be the set of all triangles in edge squares that correspond to edges in $E'$. Then, let $C = \partial B$.  
We show an almost cost preserving one-to-one correspondence between cuts in the cut completion problem in $G$ and chains with boundary $C$ in $\K$.

Let $(S, \overline{S})$ be a cut with edge set $E_S$, note that the cost of this cut is $|E_S\oplus E'|$ in the cut completion problem $(G, E')$.
In $\K$, let ${\cal V}_S$ be the symmetric difference of the vertex volumes that correspond to vertices of $S$.
The total weight of ${\cal V}_S$ is between $|E_S| (58m + 2)$ and $|E_S| (58m + 2) + 58m$.
Similarly, the total weight of ${\cal V}_S\oplus B$ is between $|E_S\oplus E'| (58m + 2)$ and $|E_S\oplus E'| ( 58m + 2) + 58m$.
Since we cannot get an exact count on the number of edges in the subgraph induced by $S$ we have a range of values for the weight of $\mathcal{V}_S$ instead of an exact weight.
However, if $E_S$ and $E_{S'}$ are two cuts with $|E_S| < |E_{S'}|$ then the weight of $\mathcal{V}_S$ is strictly less than the weight of $\mathcal{V}_{S'}$ by the construction of the edge squares.

On the other hand, let $Q$ be a $2$-chain with boundary $C$ in $\K$.
As $C$ does not intersect the interior of any edge square, for each edge square either $Q$ contains all of its triangles or none of them.  
Also, $Q\oplus B$ has no boundary, thus its complement $\R^3\setminus (Q\oplus B)$ is disconnected.
The interior of each vertex volume is completely inside one of the connected components of $\R^3\setminus (Q\oplus B)$, as by the construction $Q\oplus B$ must either contain the entire vertex volume or none of it. 
Now, let $S$ be the set of all vertices whose corresponding vertex volumes are in the unbounded connected component of $\R^3\setminus (Q\oplus B)$.  
The edges of the cut $(S, \overline{S})$ correspond to edge squares in $Q_s\oplus B$, where $Q_s$ is the set of edge square triangles of $Q$.
As $B$ is in one-to-one correspondence to $E'$, it follows that the cut completion cost of $(S, \overline{S})$ is $\frac{|Q_s|}{58m + 2}$.
We have $|Q| = |Q_s| + |Q_r|$ where $Q_r$ is the set of triangles in $Q$ not contained in edge squares.
The size of $|Q_s|$ is $58m + 2$ per edge square, and $|Q_r| \leq 58m$ by construction.
It follows that we have our desired inequality, \[\frac{Q}{58m + 2} - 1 \leq |E_S \oplus E'| \leq \frac{Q}{58m + 2}.\]
\end{proof}

The next lemma shows that an approximation algorithm for the minimum bounded chain problem implies an approximation algorithm with almost the same quality for the minimum cut completion problem.

\begin{lemma}
\label{lem:cut_completion_to_bounding_chain_2}
Let $(G = (V,E), E')$ be any instance of the minimum cut completion problem. 
For any $\alpha \geq 1$ and any $\varepsilon > 0$, there exists an instance of the $2$-dimensional minimum bounded chain problem $(\K, C)$ that can be computed in polynomial time,  such that an $\alpha$-approximation algorithm for $(\K, C)$ implies a $((1+\varepsilon)\alpha)$-approximation algorithm for $(G, E')$, and $C$ is on the outer shell of $\K$.
\end{lemma}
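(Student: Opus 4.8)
The plan is to amplify the approximation-preserving reduction from Lemma~\ref{lem:cut_completion_to_bounding_chain} by taking many disjoint parallel copies of the construction. Lemma~\ref{lem:cut_completion_to_bounding_chain} already gives an instance $(\K, C)$ with a one-to-one, almost-cost-preserving correspondence between cuts of $(G, E')$ and $2$-chains bounding $C$, but the additive slack of $\pm 1$ in $|E_S \oplus E'|$ (equivalently, the $58m$-sized error term $Q_r$ relative to the $\tau = 58m + 2$ scaling) prevents it from being a clean multiplicative reduction: an optimum of cost $1$ in the cut-completion instance maps to a bounded chain whose size is between $\tau$ and $2\tau$, so a naive $\alpha$-approximation to the chain gives only roughly a $2\alpha$-approximation to the cut. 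The standard fix is to build $t$ independent copies $(\K_1, C_1), \dots, (\K_t, C_t)$ of the instance, placed in disjoint regions of $\R^3$, take $\K = \bigcup_i \K_i$ with input cycle $C = \bigcup_i C_i$, and choose $t = t(\varepsilon)$ large.

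The key steps, in order: (1) From $(G, E')$, form $t$ disjoint embedded copies of the Lemma~\ref{lem:cut_completion_to_bounding_chain} construction; since the copies occupy disjoint bounded regions, $\K$ is still a valid $2$-complex embedded in $\R^3$, $C$ still lies on $\shell(\K)$, and — because the copies share no dual volume except the common unbounded one — a $2$-chain bounding $C$ is exactly a tuple $(Q_1, \dots, Q_t)$ of $2$-chains bounding the respective $C_i$, with $|Q| = \sum_i |Q_i|$. (2) Observe that the optimal bounded chain splits over the copies: $\mathrm{opt}(\K, C) = t \cdot \mathrm{opt}(\K_1, C_1)$, and by Lemma~\ref{lem:cut_completion_to_bounding_chain} each $\mathrm{opt}(\K_i, C_i)$ satisfies $\tau \cdot \mathrm{opt}(G, E') \le \mathrm{opt}(\K_i, C_i) \le \tau \cdot \mathrm{opt}(G, E') + 58m$. (3) Given an $\alpha$-approximate chain $Q$ for $(\K, C)$, restrict it to each copy to get chains $Q_i$ bounding $C_i$; by the lower inequality of Lemma~\ref{lem:cut_completion_to_bounding_chain}, each $Q_i$ yields a cut $(S_i, \overline{S_i})$ with $|E_{S_i} \oplus E'| \le |Q_i|/\tau$, and we output the best of these $t$ cuts, call it $(S^\ast, \overline{S^\ast})$; then $|E_{S^\ast} \oplus E'| \le \frac{1}{t}\sum_i |Q_i|/\tau = |Q|/(t\tau)$. (4) Chain the inequalities:
\[
|E_{S^\ast} \oplus E'| \;\le\; \frac{|Q|}{t\tau} \;\le\; \frac{\alpha \cdot \mathrm{opt}(\K, C)}{t\tau} \;=\; \frac{\alpha \cdot \mathrm{opt}(\K_1, C_1)}{\tau} \;\le\; \alpha\!\left(\mathrm{opt}(G, E') + \tfrac{58m}{\tau}\right).
\]
If $\mathrm{opt}(G, E') \ge 1$ this is already $\le \alpha(1 + \varepsilon)\,\mathrm{opt}(G, E')$ once $58m/\tau \le \varepsilon$, but $58m/\tau < 1$ is too weak for arbitrary small $\varepsilon$; instead scale up by replicating each edge's clause set, or equivalently note $58m/\tau$ is a fixed constant $<1$ and handle the genuinely small-optimum case separately.

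The main obstacle is exactly this last point: making the additive $58m/\tau$ term negligible \emph{relative to the optimum} when $\mathrm{opt}(G, E')$ could be as small as $1$. The clean resolution is to first reduce to cut-completion instances whose optimum is guaranteed to be $\Omega(1/\varepsilon)$ — this can be arranged by taking $\lceil 1/\varepsilon \rceil$ disjoint copies of $(G, E')$ itself before applying Lemma~\ref{lem:cut_completion_to_bounding_chain}, since the optimum of a disjoint union is the sum of the optima and cut-completion has optimum at least $1$ on any instance with $E' \ne \emptyset$ that is not already a cut (and the trivial cases are detectable in polynomial time). With $\mathrm{opt}(G, E') \ge 1/\varepsilon$, the term $\alpha \cdot \frac{58m}{\tau} \le \alpha$ is at most $\varepsilon\alpha \cdot \mathrm{opt}(G,E')$, giving the claimed $(1+\varepsilon)\alpha$ guarantee; here $t$ can even be taken to be $1$. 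One must double-check that feasibility is preserved (the input cycle on each copy is null-homologous because it was in the single-copy construction, and disjoint union preserves this) and that the whole construction remains polynomial-size, which it does since $t$ and $1/\varepsilon$ are constants and each copy has $O(m^2)$ triangles.
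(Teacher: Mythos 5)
Your final argument is correct, but it resolves the crucial difficulty by a different device than the paper. Both proofs agree on the easy part: once $\mathrm{opt}(G,E') \ge 1/\varepsilon$, the additive $+1$ slack from Lemma~\ref{lem:cut_completion_to_bounding_chain} is absorbed into a $(1+\varepsilon)$ factor via $|E_S\oplus E'| \le \alpha(\mathrm{opt}+1) \le (1+\varepsilon)\alpha\cdot\mathrm{opt}$. The divergence is in how the small-optimum case is handled. The paper keeps the single instance $(\K,C)$ from Lemma~\ref{lem:cut_completion_to_bounding_chain} and adds an algorithmic branch: if $\mathrm{opt}(G,E') < 1/\varepsilon$, it solves the cut completion instance exactly by enumerating all candidate sets $E''=E_S\oplus E'$ of size below $1/\varepsilon$ in $n^{O(1/\varepsilon)}$ time, never invoking the chain oracle. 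You instead amplify the \emph{instance}: take $\lceil 1/\varepsilon\rceil$ disjoint copies of $(G,E')$, observe that the optimum is additive over copies, detect the $\mathrm{opt}=0$ case directly (a parity/2-coloring feasibility check), and otherwise apply Lemma~\ref{lem:cut_completion_to_bounding_chain} once to the amplified graph, recovering a cut of $G$ by taking the cheapest copy. Your route buys a reduction in which the oracle is genuinely used on a single constructed complex (except in the trivially detectable zero case) and avoids the $n^{O(1/\varepsilon)}$ enumeration; the paper's route avoids having to verify that the construction of Lemma~\ref{lem:cut_completion_to_bounding_chain} and the additivity of the optimum behave correctly on disconnected graphs (they do, since the embedding and the cut correspondence never use connectivity). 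Your opening gambit of taking $t$ copies of the \emph{complex} is, as you yourself note, a dead end — the per-copy additive error does not shrink relative to a unit optimum — so the proof stands only on the amplify-$G$ version; make sure the write-up presents that as the actual argument rather than as a fallback.
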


\begin{proof}
Let $\varepsilon > 0$.  Given an $\alpha$-approximation algorithm for the minimum bounded chain  problem, we describe an $((1+\varepsilon)\alpha)$-approximation algorithm for the cut completion problem.  Let $G = (V, E)$, and $E'\subseteq E$ be any instance of the cut completion problem, and let $(S_{opt}, \overline{S_{opt}})$ with edge set be an optimal solution for this instance.
Our algorithm considers two cases, based on whether $|E_{S_{opt}} \oplus E'| < 1/\varepsilon$ or not.  It solves the problem under each assumption and outputs the best solution it obtains in the end.

If $|E_{S_{opt}} \oplus E'| < 1/\varepsilon$, then our algorithm finds the optimal solution in $O(n^{1/\varepsilon+O(1)})$ time by considering all subsets of edges $E''$ of size at most $1/\varepsilon$ as candidates for $E_{S_{opt}} \oplus E'$.  From all candidates, we return the minimum $E''$ such that $E'' \oplus E'$ is a cut.  Note this is an exact algorithm, so in this case we find the optimal solution.

Otherwise, if $|E_{S_{opt}} \oplus E'| \geq 1/\varepsilon$, we use the given $\alpha$-approximation algorithm for the minimum bounded chain  problem for a simplicial complex $\K$, and chain $C$ that corresponds to $(G, E')$ by Lemma~\ref{lem:cut_completion_to_bounding_chain}.  Note that $\K$ is an unweighted simplicial complex piecewise linearly embedded in $\R^3$ and $C$ is a cycle in its outer shell.

Let $Q_{opt}$ be the corresponding $2$-chain to $(S_{opt}, \overline{S_{opt}})$ in $\K$.  Thus, $\frac{Q_{opt}}{\tau} - 1 \leq |E_{S_{opt}} \oplus E'| \leq \frac{|Q_{opt}|}{\tau}$.
In addition, let $Q$ be the surface with boundary $C$ that the $\alpha$-approximation algorithm finds, so $|Q| \leq \alpha\cdot|Q_{opt}|$.
Finally, let $(S, \overline{S})$ be the cut corresponding to $Q$ in $G$ via the one-to-one correspondence of Lemma~\ref{lem:cut_completion_to_bounding_chain}. Therefore, $\frac{Q}{\tau} - 1 \leq |E_{S} \oplus E'| \leq \frac{|Q|}{\tau}$.
Putting everything together,
\begin{equation}
\label{eqn:plus_one_bound}
|E_{S} \oplus E'| \leq \frac{|Q|}{\tau} \leq \alpha\cdot\frac{|Q_{opt}|}{\tau} \leq \alpha\cdot\left(|E_{S_{opt}} \oplus E'| + 1\right).
\end{equation}

Since $|E_{S_{opt}} \oplus E'| \geq 1/\varepsilon$, we have: $|E_{S_{opt}} \oplus E'| + 1 \leq (1+\varepsilon)\cdot|E_{S_{opt}} \oplus E'|$.
Therefore, together with (\ref{eqn:plus_one_bound}), we have a $((1+\varepsilon)\alpha)$-approximation algorithm, as desired.
\end{proof}

\subsection{Minimum homologous cycle to minimum cut completion} 
We show a similar reduction from the cut completion problem to the minimum homologous cycle problem for $1$-dimensional cycles on orientable $2$-manifolds.
The minimum homologous cycle problem is the special case of the minimum homologous chain problem when the input chain is required to be a cycle, so showing hardness of approximation for it implies hardness of approximation for the more general minimum homologous chain problem.

\begin{lemma}
\label{lem:cut_completion_to_hom_cycle}
Let $(G = (V,E), E')$ be any instance of the minimum cut completion problem. 
For any $\alpha \geq 1$, there exists an instance of the $1$-dimensional minimum homologous cycle problem $(\M, D)$ that can be computed in polynomial time such that an $\alpha$-approximation for $(\M, D)$ implies an $((1+\epsilon)\alpha)$-approximation for $(G, E')$.
\end{lemma}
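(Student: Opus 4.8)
\emph{The plan is} to mirror the two-stage argument used for the minimum bounded chain problem (Lemmas~\ref{lem:cut_completion_to_bounding_chain} and~\ref{lem:cut_completion_to_bounding_chain_2}). First I would build from $(G=(V,E),E')$ an orientable $2$-manifold $\M$ together with a distinguished $1$-cycle $D$ and an almost cost-preserving one-to-one correspondence between cuts of $G$ and a canonical family of $1$-cycles homologous to $D$; then I would run a brute-force-or-approximate dichotomy to upgrade an $\alpha$-approximation for $(\M,D)$ into a $((1+\varepsilon)\alpha)$-approximation for $(G,E')$.

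\textbf{Construction.} Take any piecewise-linear embedding of $G$ in $\R^3$ and let $\M$ be the boundary of a thin tubular neighborhood of it --- essentially the ``vertex cubes plus edge tubes'' surface of Figure~\ref{fig:thickend}. This is a closed orientable surface (it bounds a compact region of $\R^3$) of genus $m-n+1$ (assuming $G$ connected; otherwise handle components separately), the dimension of the cycle space of $G$. For each edge $e$ let $\gamma_e$ be the meridian circle of the tube of $e$. The only relations among the classes $[\gamma_e]$ in $H_1(\M;\Z_2)$ are $\sum_{e\ni v}[\gamma_e]=0$ for each vertex $v$ (each such sum bounds the punctured-sphere piece carved out by the cube of $v$), so $\sum_{e\in T}\gamma_e$ is null-homologous in $\M$ \emph{if and only if} $T$ is an edge cut of $G$ (an element of the cut space). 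Now triangulate: subdivide a thin annular band around each $\gamma_e$ so finely that every non-contractible loop of that band uses at least $\tau:=58m+2$ edges, and triangulate the remaining ``bulk'' of $\M$ coarsely enough that it contributes at most $58m<\tau$ edges in total, exactly as edge squares are handled in Lemma~\ref{lem:cut_completion_to_bounding_chain}. Finally set $D:=\sum_{e\in E'}\gamma_e$, which is a $1$-cycle on $\M$.

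\textbf{Correspondence and wrap-up.} To a cut $(S,\overline S)$ with edge set $E_S$ I associate the $1$-cycle $X_S:=\sum_{e\in E_S\oplus E'}\gamma_e$; it is homologous to $D$ because $X_S\oplus D=\sum_{e\in E_S}\gamma_e$ is null-homologous, and $|X_S|$ lies between $\tau\,|E_S\oplus E'|$ and $\tau\,|E_S\oplus E'|+58m$. Conversely, from any $1$-cycle $X$ homologous to $D$ I read off the set $T$ of edges whose band $X$ crosses an odd number of times; then $\sum_{e\in T}\gamma_e$ is homologous to $D$, hence $T\oplus E'$ is a cut $E_S$, and the band lower bound gives $|X|\ge\tau\,|T|=\tau\,|E_S\oplus E'|$. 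Thus for corresponding objects $\frac{|X|}{\tau}-1\le |E_S\oplus E'|\le\frac{|X|}{\tau}$, the exact analogue of Lemma~\ref{lem:cut_completion_to_bounding_chain}. Now, given an $\alpha$-approximation algorithm for $(\M,D)$: if the cut-completion optimum is below $1/\varepsilon$, enumerate all edge subsets of size at most $1/\varepsilon$ in $n^{1/\varepsilon+O(1)}$ time to find it exactly; otherwise run the approximation on $(\M,D)$, translate its output back to a cut via the correspondence, and note that the additive ``$-1$'' is absorbed into the factor $(1+\varepsilon)$ since the optimum is at least $1/\varepsilon$ --- this is verbatim the argument of Lemma~\ref{lem:cut_completion_to_bounding_chain_2}. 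Returning the better of the two candidates yields a $((1+\varepsilon)\alpha)$-approximation, and the whole construction is polynomial time.

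\textbf{Main obstacle.} The delicate point is the reverse direction of the correspondence: showing that a minimum cycle homologous to $D$ cannot exploit the coarsely triangulated bulk of $\M$ to undercut the bound $|X|\ge\tau\,|T|$. This needs the annular bands triangulated --- say as a cyclic strip of $\tau$ columns --- so that \emph{every} $1$-chain inside a band whose intersection number against a transversal column edge is odd has at least $\tau$ edges (not merely the core circle), together with a careful edge count keeping the bulk below $\tau$; these are precisely the on-surface analogues of the edge-square bookkeeping of Lemma~\ref{lem:cut_completion_to_bounding_chain}. A secondary point, orientability of $\M$, is immediate since $\M$ is the boundary surface of a compact subset of $\R^3$.
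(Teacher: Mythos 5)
Your approach is sound and proves the statement, but it is genuinely different from the paper's, because you invert the weighting scheme. The paper keeps each meridian \emph{cheap}: it deletes the edge squares of Lemma~\ref{lem:cut_completion_to_bounding_chain} entirely, leaving a $4$-edge \emph{edge ring} per edge of $G$, sets $D$ to the rings of $E'$, and instead makes the rest of the surface expensive by subdividing every non-ring edge into a path of length $\tau = 4\lceil\alpha\rceil|E|+1$ (triangulating the resulting faces so that shortest paths are preserved). The approximate solution is then post-processed combinatorially --- each excursion into a vertex region is replaced by a shortest path along that region's boundary --- which forces the output to be a union of edge rings and makes the correspondence exactly cost-preserving ($|C'| = 4|E_{S'}\oplus E'|$), so no additive error and no brute-force dichotomy arise (at the price of a reduction whose size depends on $\alpha$). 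You do the opposite: expensive meridian bands ($\tau$ edges each), cheap bulk, and the additive $-1$ absorbed by the dichotomy of Lemma~\ref{lem:cut_completion_to_bounding_chain_2}; your reduction is independent of $\alpha$, which is a mild advantage. The cost is borne in the reverse direction, which you rightly flag as the main obstacle: to conclude $[X]=\sum_{e\in T}[\gamma_e]$ from the odd-crossing set $T$ you cannot simply pair against longitudes (a longitude of one tube generally also crosses the cores of the tubes on its return path, so the meridian--longitude pairing is not diagonal); the clean argument is to cut $\M$ along the bands and use that the remaining vertex pieces are spheres with holes, so whatever of $X$ survives outside the bands contributes only boundary circles, i.e.\ cut-space sums of meridians. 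That argument works and your "relations" paragraph contains the needed ingredient, but it should be spelled out; the paper's shortest-path shortcut sidesteps this homological bookkeeping entirely (its own gap being the terse claim that the shortcut solution must be a union of edge rings).
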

\begin{proof}
We construct a 2-manifold $\M$ as in the proof of Lemma \ref{lem:cut_completion_to_bounding_chain}, but we omit the edge squares.
Each edge of $G$ corresponds to a cycle with 4 edges in $\M$; these cycles are the boundaries of the omitted edge squares. We call these cycles \emph{edge rings}.
The connected components of $\M$ after removing the edge rings correspond to the vertices of $G$, we call these connected components \emph{vertex regions}.
We set $D$ to be equal to the set of edge rings corresponding to $E'$.
Intuitively, if $X$ is the minimum cycle homologous to $D$ we do not want $X \oplus D$ to intersect the interior of any vertex region. That is, $X \oplus D$ is a collection of edge rings and corresponds to a cut in $G$.
To achieve this, we subdivide each edge not contained in an edge ring into a long path.
The result is an embedded graph with non-triangular faces, which is not a simplicial complex. To fix this, we triangulate the inside of each non-triangular face such that the shortest path between any two vertices on the face remains the shortest path after the triangulation.
Given any $\alpha$-approximation of the new complex we can obtain a smaller solution using only the edge rings, which corresponds to a cut in $G$. Our formal construction follows.

Let $\tau = 4 \lceil \alpha \rceil |E| + 1$; we subdivide each edge not contained in an edge ring $\tau$ times.
For each face of length $\ell > 3$ we triangulate by adding $\ell + 1$ concentric cycles, each with $\ell$ vertices, labeled $\gamma_0,\dots,\gamma_\ell$, where $\gamma_0$ is the original face from the subdivided version of $\M$. By $v_{i, j}$ we denote the $j$th vertex in $\gamma_i$. We add the edges $(v_{i, j}, v_{i+1, j}$ and $(v_{i, j}, v_{i, j+1 \mod \ell})$.
To complete the triangulation we add one additional vertex $\overline{v}$ at the center of $\gamma_\ell$ and add an edge between it and each vertex on $\gamma_\ell$. We call the new simplicial complex $\M'$. See Figure \ref{fig:face_triangulation} for an example.

\begin{figure}[!htb]
  \centering
    \includegraphics[height=1.5in]{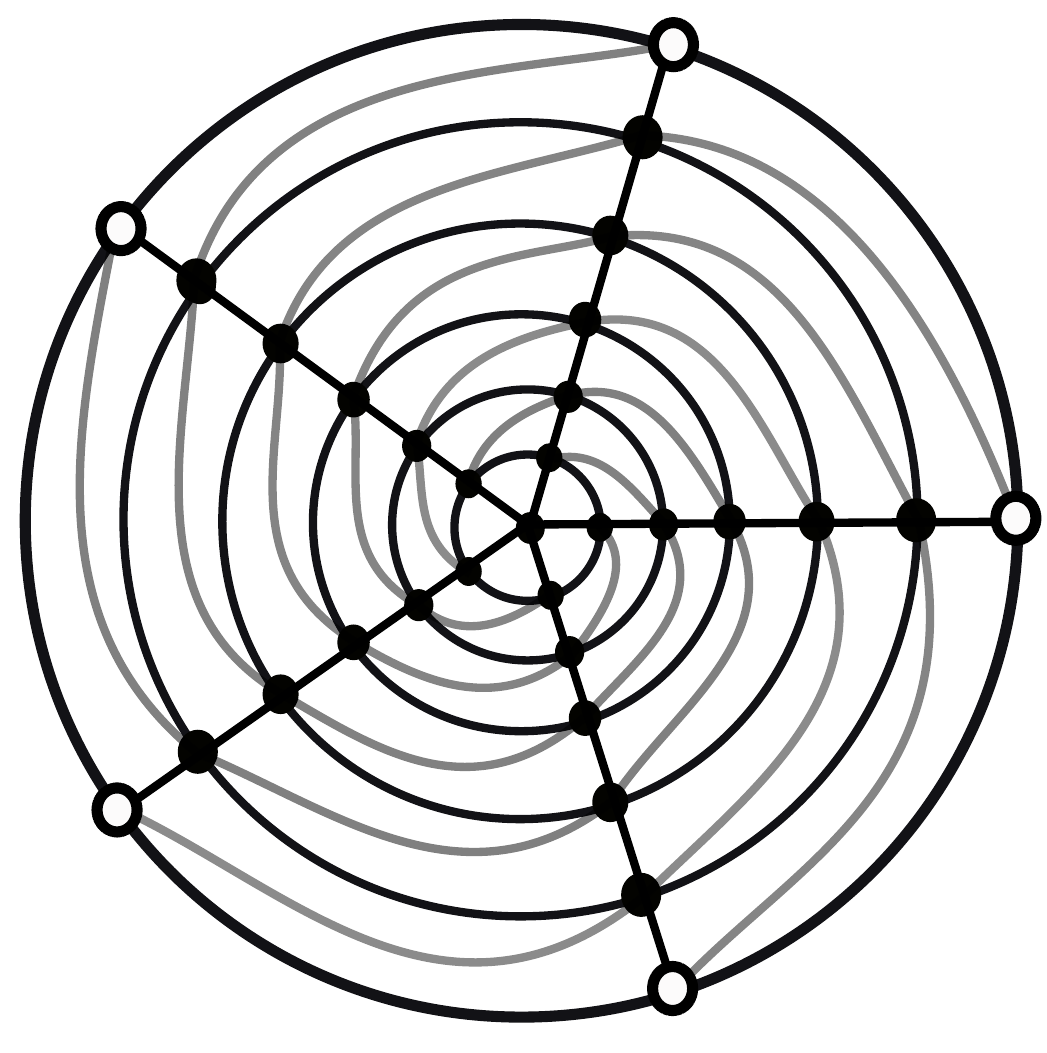}
  \caption{Subdividing a face of length five; the outer face with white vertices is the original face.}
  \label{fig:face_triangulation}
\end{figure}

Let $(S_{opt}, \overline{S_{opt}})$ be an optimal solution to the minimum cut completion instance $(G, E')$.
Suppose we can compute an $\alpha$-approximation $C$ of the minimum homologous cycle instance $(\M', D)$, hence $|C| \leq \alpha |C_{opt}|$. By our construction an optimal solution to $(\M', D)$ has the same size as an optimal solution to $(\M, D)$.
As $C$ is a cycle, if $C$ crosses a cycle $\gamma_0$ it must cross it an even number of times. For any two consecutive vertices $u, v \in \gamma_0$ in $C$ we replace the path between them with the shortest path contained in $\gamma_0$.
We call the new cycle $C'$, since $C' \leq C$ we have that $C'$ is also an $\alpha$-approximation for $(\M', D)$. Note that $C'$ is a union of edge rings, otherwise $|C'| > \alpha |C_{opt}|$.
It follows that $C'$ corresponds to a cut $E_{S'}$ with $|C'| = 4|E_{S'} \oplus E'|$.
Hence, we have $|E_{S'} \oplus E'| \leq \alpha |E_{S_{opt}} \oplus E'|$.
Thus, $E_{S'}$ is an $\alpha$-approximation for $(G, E')$.
\end{proof}
\subsection{Wrap up} 
It remains to show that the cut completion problem is hard to approximate.  We show this via a straightforward reduction from the \emph{minimum uncut problem}: given a graph $G = (V,E)$, find a cut with minimum number of uncut edges.  Note that the optimal cuts for the minimum uncut problem and the maximum cut problem coincide, yet, approximation algorithms for one problem do not necessarily imply approximation algorithm for the other one. 

\begin{lemma}
\label{lem:cut_completion_to_uncut}
The minimum uncut problem is a special case of the minimum cut completion problem.
\end{lemma}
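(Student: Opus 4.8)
The plan is to observe that the minimum uncut problem is exactly the minimum cut completion problem in the special case $E' = E$. Given an instance $G = (V,E)$ of minimum uncut, I would form the cut completion instance $(G, E')$ with the graph unchanged and $E' = E$, and take the reduction on solutions to be the identity: a cut $(S,\overline{S})$ of $G$ is mapped to the same cut of $G$. This is clearly a polynomial-time transformation and a bijection on feasible solutions.

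The only thing to check is that this identity map preserves cost exactly. For any cut $(S,\overline{S})$ with edge set $E_S$, every edge of $E_S$ is an edge of $G$, so $E_S \subseteq E = E'$, and therefore $E_S \oplus E' = E' \setminus E_S = E \setminus E_S$. But $E \setminus E_S$ is precisely the set of edges with both endpoints on the same side of $(S,\overline{S})$, i.e. the uncut edges of that cut. Hence the cut completion objective $|E_S \oplus E'|$ equals the number of uncut edges of $(S,\overline{S})$, for every cut $(S,\overline{S})$. Consequently the two problems have the same feasible region with identical objective values, so any optimal (resp.\ $\alpha$-approximate) solution to the cut completion instance $(G,E)$ is an optimal (resp.\ $\alpha$-approximate) solution to the minimum uncut instance $G$, and vice versa.

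There is essentially no obstacle: the entire content is the set identity $E_S \oplus E = E \setminus E_S$ (which uses only $E_S \subseteq E$) together with the definition of an uncut edge. Chained with Lemma~\ref{lem:cut_completion_to_bounding_chain_2} and Lemma~\ref{lem:cut_completion_to_hom_cycle}, together with the known inapproximability of minimum uncut --- \APX-hardness under $\textsf{P}\neq\NP$ and no constant-factor approximation under the unique games conjecture --- this cost-preserving reduction transfers the hardness to the minimum bounded chain and minimum homologous chain problems, establishing Theorems~\ref{thm:MBC_hardness} and \ref{thm:MHC_hardness}; the small multiplicative loss $(1+\varepsilon)$ incurred in Lemmas~\ref{lem:cut_completion_to_bounding_chain_2} and \ref{lem:cut_completion_to_hom_cycle} is harmless since these inapproximability statements already rule out every constant factor.
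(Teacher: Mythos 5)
Your proposal is correct and matches the paper's proof essentially verbatim: both set $E' = E$ and observe that $|E_S \oplus E| = |E \setminus E_S|$ is exactly the number of uncut edges. The extra remarks about chaining with the other lemmas are fine but not part of this lemma's content.
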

\begin{proof}
Consider the cut completion problem for $G = (V,E)$, and let $E' = E$.  Let $(S, \overline{S})$ be any cut with edge set $E_S$.
The cut completion cost of this cut is 
\[
|E_S \oplus E'| = |E_S \oplus E| = |E\setminus E_S|,
\]
which is the number of uncut edges by $(S, \overline{S})$.
\end{proof}

Now, we are ready to prove our hardness results. 

\begin{proof}[Proof of Theorem~\ref{thm:MBC_hardness} and \ref{thm:MHC_hardness}]
The minimum uncut problem is hard to approximate within $(1+\varepsilon)$ for some $\varepsilon > 0$~\cite{Papadimitriou88OptApxComplexity}.  In addition, it is hard to approximate within any constant factor assuming the unique games conjecture~\cite{Manurangsi:2018, Khot07OptInapprxMaxCut, Chawla06HardnessApxMulticutSparsecut, Khot15UGCIntGapEmb}.   
By Lemma~\ref{lem:cut_completion_to_uncut}, the cut completion problem generalizes the minimum uncut problem. 
Finally, by Lemma~\ref{lem:cut_completion_to_hom_cycle} and \ref{lem:cut_completion_to_bounding_chain_2}, for any $\alpha > 1$ and $\varepsilon > 0$, an $\alpha$-approximation algorithm for the minimum bounded chain problem or the minimum homologous cycle problem implies a $((1+\varepsilon) \alpha)$-approximation algorithm for the cut completion problem.
\end{proof}

\section{A polynomial time special case}\label{sec:exact}
We have shown hardness results, approximation algorithms and parameterized algorithms for the minimum bounded chain problem.
We showed that the problem is hard to approximate even when the input cycle $C$ is on the outer shell of an unweighted $2$-complex embedded in $\R^3$.
If $C$ is null-homologous on the outer shell of $\K$ there is an exact polynomial time algorithm to find the minimum chain bound by $C$.
The assumption that $C$ is null-homologous on the outer shell of $\K$ allows us to treat the problem as a generalization of the shortest $(s, t)$-path problem in planar graphs when $s$ and $t$ are contained on the boundary of the unbounded face.
Hence, we can generalize the duality between shortest paths and minimum cuts in planar graphs to $d$-complexes embedded in $\R^{d+1}$.
The algorithm was first found by Kirsanov and Gortler under the assumption that $H_d(\K)$ is trivial \cite{Kirsanov2004}.
For the sake of completeness we include the same algorithm but described for any $d$-complex embedded in $\R^{d+1}$.
Before describing the algorithm we prove the following lemma about graph cuts, which will be useful in the proof of correctness of the algorithm.

\begin{lemma}
\label{lem:two_cuts}
Let $(S,\overline{S})$ and $(S', \overline{S'})$ be two $(s,t)$-cuts of a graph $G$ with edge sets $E_S$ and $E_{S'}$, respectively. The symmetric difference $E_S\oplus E_{S'}$ is the set of edges of a cut that has $s$ and $t$ on the same side. 
\end{lemma}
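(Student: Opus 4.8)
The statement is a purely graph-theoretic fact about symmetric differences of $(s,t)$-cuts, and I would prove it by working with the vertex bipartitions rather than directly with edge sets. The plan is to identify each $(s,t)$-cut with a partition of $V$ into two sides, one containing $s$ and one containing $t$, and then show that the symmetric difference of the two edge sets is exactly the set of edges of the partition obtained by ``XORing'' the two bipartitions, which places $s$ and $t$ on the same side.

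\textbf{Key steps.} First, I would recall the correspondence: for an $(s,t)$-cut $(S,\overline S)$ with $s \in S$, $t \in \overline S$, the associated edge set $E_S$ is exactly the set of edges with one endpoint in $S$ and one in $\overline S$ (I may take the cut to be this canonical ``minimal'' edge set as in the Preliminaries, since a general $(s,t)$-cut contains such a bipartition cut; alternatively I can note the lemma is really about the edge sets $E_S$, $E_{S'}$ induced by the two partitions). Second, given $(S,\overline S)$ and $(S',\overline{S'})$ with $s \in S \cap S'$ and $t \in \overline S \cap \overline{S'}$, define $T = (S \cap S') \cup (\overline S \cap \overline{S'})$, i.e.\ the set of vertices on which the two partitions agree; equivalently $T = (S \oplus S')^c$ viewing $S,S'$ as the ``$s$-sides.'' Then $s \in T$ and $t \in T$, so the partition $(T, \overline T)$ has $s$ and $t$ on the same side. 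Third, the core computation: show $E_S \oplus E_{S'} = E_T$, where $E_T$ is the set of edges crossing the partition $(T,\overline T)$. For any edge $e = (u,v)$, whether $e \in E_S$ is determined by the parity of $|\{u,v\} \cap S|$ (it is in $E_S$ iff $u,v$ lie on opposite sides of the $S$-partition), and similarly for $E_{S'}$ and $E_T$; a short case analysis over the four possible ``side patterns'' of $u$ and $v$ shows that $e$ crosses $(T,\overline T)$ iff $e$ crosses exactly one of $(S,\overline S)$ and $(S',\overline{S'})$, which is precisely membership in the symmetric difference. Hence $E_S \oplus E_{S'}$ is the edge set of the cut $(T,\overline T)$, which keeps $s$ and $t$ together.

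\textbf{Main obstacle.} There is no deep obstacle; the only subtlety is the bookkeeping about what ``$(s,t)$-cut'' means. If an $(s,t)$-cut is allowed to be any edge set whose removal disconnects $s$ from $t$ (not necessarily a minimal bond), then $E_S \oplus E_{S'}$ need not literally equal an induced cut $E_T$ on the nose, so I would either (i) state and use the lemma for the canonical minimal cuts $E_S$, $E_{S'}$ induced by the partitions, which is all the later application needs, or (ii) phrase the conclusion as ``$E_S \oplus E_{S'}$ contains no $(s,t)$-path-breaking obstruction separating $s$ from $t$,'' i.e.\ $s$ and $t$ remain connected in $(V, E \setminus (E_S \oplus E_{S'}))$ is \emph{not} what we want---rather we want that the cut it \emph{does} induce keeps $s,t$ together. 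The cleanest route is option (i): assume $E_S = \{(u,v) : |\{u,v\}\cap S| = 1\}$, and then the parity argument above gives the result immediately. I would also double-check the edge-case $u = v$ is impossible (no self-loops needed here) and that parallel edges cause no trouble, since each parallel copy is treated independently by the same parity rule.
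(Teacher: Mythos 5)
Your proposal is correct and takes essentially the same route as the paper: the paper also forms the XOR partition $(S\oplus S', \overline{S\oplus S'})$ (your $T$ is exactly $\overline{S\oplus S'}$), notes that $s,t$ lie on the agreement side, and verifies $E_S\oplus E_{S'}=E_{S\oplus S'}$ by the same case analysis over the endpoints' side patterns. Your remark about taking $E_S,E_{S'}$ to be the canonical bipartition edge sets matches the paper's convention from the preliminaries, so no further adjustment is needed.
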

\begin{proof}
We show the edge set of the cut $(S\oplus S', \overline{S\oplus S'})$ is $E_S\oplus E_{S'}$.
The statement follows as $s, t \in \overline{S\oplus S'}$.

Let $e = (u,v) \in E_S\oplus E_{S'}$. 
Either, $e\in E_S$ and $e\notin E_{S'}$ or $e\notin E_{S}$ and $e\in E_{S'}$. 

In the first case, there are two possibilities up to symmetry of $(u,v)$.  Either $u\in S\cap S'$ and $v\in \overline{S}\cap S'$, which implies $u\in \overline{S\oplus S'}$ and $v\in S\oplus S'$, or  $u\in S\cap \overline{S'}$ and $v\in \overline{S}\cap \overline{S'}$, which implies $u\in S\oplus S'$ and $v\in \overline{S\oplus S'}$.

In the second case, there are again two possibilities up to symmetry of $(u,v)$. 
Either $u\in S\cap S'$ and $v\in S\cap \overline{S'}$, which implies $u\in \overline{S\oplus S'}$ and $v\in S\oplus S'$, or  $u\in \overline{S}\cap S'$ and $v\in \overline{S}\cap \overline{S'}$, which implies $u\in S\oplus S'$ and $v\in \overline{S\oplus S'}$.
\end{proof}

Let $F\subseteq \shell(\K)$ be a $d$-chain such that $\partial F = C$.  Such an $F$ exists by the assumption of this section. 
We define a cut problem based on $F$. 
Let $\K^*$ be the dual graph of the complex $\K$.
By the definition of $\shell(\K)$, each edge of $(\shell(\K))^*$ is adjacent to $v_\infty$.  
We build the graph $H$ from $\K^*$ by splitting $v_\infty$ as follows.
We replace $v_\infty$ with two vertices $v^+_\infty$ and $v^-_\infty$.  
We replace the incident edges to $v_\infty$ as follows:
\begin{enumerate}
	\item [(i)] A loop that is dual to a face in $F$ is replaced by a $(v^-_\infty, v^+_\infty)$ edge.
	\item [(ii)] A loop that is dual to a face \emph{not} in $F$ is replaced by $v^+_\infty$-loops.
	\item [(iii)] A non-loop edge $(v_\infty, u)$ that is dual to a face in $F$ is replaced by a $(v^-_\infty, u)$-edge.
	\item [(iv)] A non-loop edge $(v_\infty, u)$ that is dual to a face \emph{not} in $F$ is replaced by a $(v^+_\infty, u)$-edge.
\end{enumerate}

\begin{figure}[!htb]
  \centering
    \includegraphics[height=2.0in]{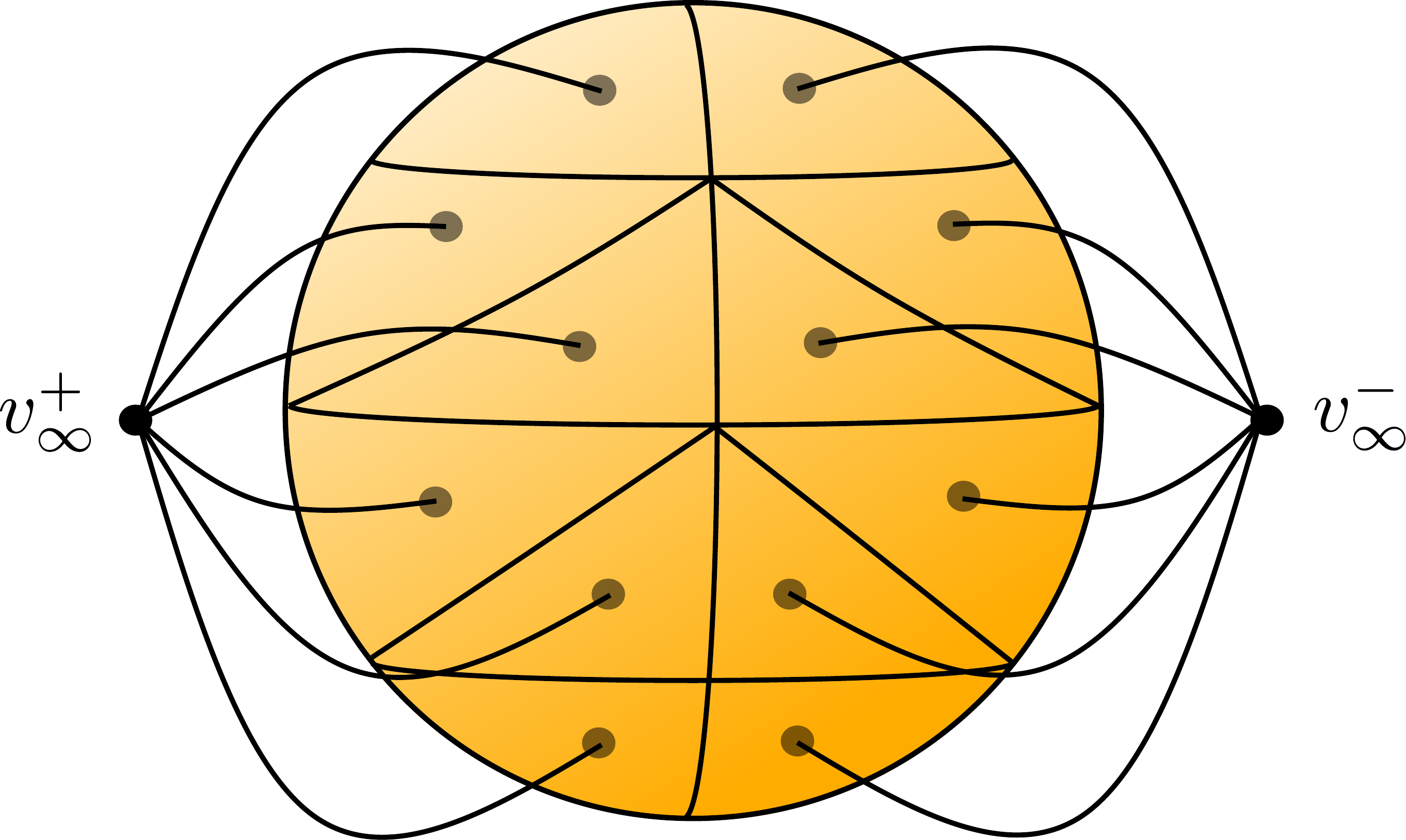}
  \caption{The modified dual graph of a simplicial complex whose outer shell is a triangulated sphere. The vertical line represents the boundary input boundary $C$ which partitions $\shell(\K)$ into two regions.}  
  \label{fig:thickend}
\end{figure}

Note that all of the faces of $F$ correspond to edges that are incident to $v^-_\infty$.
We are now ready to prove the main theorem of the section.

\begin{theorem}\label{2cyclecorrectness}
Let $\K$ be a simplicial complex embedded in $\R^{d+1}$ and $C$ be a null-homologous $(d-1)$-cycle in $\shell(\K)$.
A $d$-chain $D$ is a minimum $d$-chain bounded by $C$ if and only if $D^*$ is a minimum $(v^+_\infty, v^-_\infty)$-cut in $H$.
\end{theorem}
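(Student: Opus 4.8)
The plan is to establish a bijection between $d$-chains bounded by $C$ and $(v^+_\infty, v^-_\infty)$-cuts in $H$ that preserves cardinality, from which the minimum-cost statement follows immediately. First I would set up the correspondence. Given a $d$-chain $D$ with $\partial D = C$, observe that $D \oplus F$ is a $d$-cycle, so by cycle/cut duality it corresponds to a minimal edge cut $(S, \overline S)$ in $\K^*$; say $v_\infty \in \overline S$ without loss of generality. I would then lift this cut to $H$: the claim is that placing $v^+_\infty$ on the same side as the tail of each type-(iv) edge and $v^-_\infty$ on the opposite side yields a well-defined $(v^+_\infty, v^-_\infty)$-cut in $H$. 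The key structural fact making this work is that $F \subseteq \shell(\K)$, so every edge of $\K^*$ that is \emph{not} incident to $v_\infty$ is unaffected by the vertex-splitting, and the edges incident to $v_\infty$ are precisely those handled by cases (i)--(iv). Conversely, given a $(v^+_\infty, v^-_\infty)$-cut in $H$ with edge set $E'$, merging $v^+_\infty$ and $v^-_\infty$ back into $v_\infty$ turns $E'$ into an edge cut of $\K^*$ (this is where I would invoke a fact like \autoref{lem:two_cuts} in spirit — merging two vertices of a graph sends cuts to cuts), whose dual $(E')^*$ is a $d$-cycle, and then $D = (E')^* \oplus F$ satisfies $\partial D = C$.

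The heart of the argument is checking that the cut in $H$ separates $v^+_\infty$ from $v^-_\infty$ precisely when $F \subseteq D^*$'s complement behaves correctly — i.e., that $F$'s dual edges all land between the two split copies. Concretely: the edges of $\K^*$ incident to $v_\infty$ that lie in the cut $(S,\overline S)$ of $D\oplus F$ are exactly the dual edges of $(D \oplus F) \cap \shell(\K)$. An edge dual to a simplex in $F$ is a $(v^-_\infty, \cdot)$ edge in $H$ (cases (i),(iii)), and an edge dual to a shell simplex not in $F$ is a $(v^+_\infty, \cdot)$ edge (cases (ii),(iv)). I would verify that the side assignment of the remaining vertices (those of $\K^* \setminus v_\infty$, inherited from the cut $(S, \overline S)$) is consistent: a vertex $u$ adjacent to $v_\infty$ in $\K^*$ via an $F$-edge ends up with $u$ and $v^-_\infty$ on opposite sides iff that edge is cut, and similarly for non-$F$ edges and $v^+_\infty$. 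Summing up, the cut edges of $H$ are exactly $(D \oplus F)^* $ with the $v_\infty$-incident portion redistributed, so $|D^*_{\text{cut in }H}| $ equals $|(D\oplus F)^*| = |D \oplus F|$... and here is where I need to be careful.

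The main obstacle is the cardinality bookkeeping: the cut in $H$ has edge set corresponding to $D \oplus F$ as a set of $d$-simplices, but we want it to have size $|D|$, not $|D \oplus F|$. Resolving this is the crux: I expect the point is that the $(v^+_\infty, v^-_\infty)$-cut should be taken to include \emph{all} the type-(i)/(iii) edges dual to $F$ by default (since $v^+_\infty$ and $v^-_\infty$ must be separated, every $F$-edge with both endpoints being the split copies is forced into the cut, and the others can be routed), so that the cut edge set is $(D \oplus F)^* \oplus F^* = D^*$ on the $\shell$ part, matching $|D|$ exactly. I would therefore carefully track which $F$-dual edges are forced into every $(v^+_\infty,v^-_\infty)$-cut and show the symmetric-difference-with-$F$ accounting makes the cut in $H$ have size exactly $|D|$; once that identity holds on both directions of the bijection, minimality transfers directly: $D$ is minimum bounded by $C$ iff $D^*$ is a minimum $(v^+_\infty, v^-_\infty)$-cut in $H$. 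A secondary subtlety worth a sentence is that minimal cuts versus arbitrary cuts must be reconciled — cycle/cut duality gives minimal cuts, but an arbitrary $(v^+_\infty,v^-_\infty)$-cut's dual need not be a single $d$-cycle; I would handle this by noting we may always pass to a minimal cut without increasing cardinality, so the minimum is attained at a minimal cut in any case.
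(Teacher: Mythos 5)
Your overall plan — bounded chains correspond to cuts via cycle/cut duality, with $F$ acting as a fixed reference chain and symmetric differences translating between the two — is the same strategy the paper uses. Your ``crux'' paragraph correctly identifies that the cut edge set you want is $D^*$ itself, obtained as $(D\oplus F)^* \oplus F^*$, and that this is what makes $|D^*| = |D|$ hold with no bookkeeping slack. The paper arrives at the same conclusion but proves the forward direction more directly, by showing that every simple $(v^-_\infty, v^+_\infty)$-path in $H$ must cross $D^*$ (a parity argument using the fact that all edges of $F^*$ are incident to $v^-_\infty$), rather than by appealing to a two-cut/XOR lemma; this sidesteps any worry about whether the symmetric-difference manipulation produces a genuine $(v^-_\infty, v^+_\infty)$-separating set.

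There is, however, a genuine error in your converse direction as written. You claim that merging $v^+_\infty$ and $v^-_\infty$ back into $v_\infty$ turns a $(v^+_\infty, v^-_\infty)$-cut $E'$ directly into an edge cut of $\K^*$, and justify this with ``merging two vertices sends cuts to cuts.'' That heuristic fails precisely for cuts that separate the two merged vertices: if $E' = E_S$ with $v^+_\infty \in S$ and $v^-_\infty \in \overline S$, then the merged vertex $v_\infty$ has no well-defined side, so $E'$ is not of the form $E_{S'}$ in $\K^*$ and need not be a cut there at all. The fix is the one the paper makes: first XOR with $F^*$ (which is itself a $(v^-_\infty, v^+_\infty)$-cut since $F^*$ is exactly the edge star of $v^-_\infty$); by \autoref{lem:two_cuts} this yields a cut with $v^-_\infty$ and $v^+_\infty$ on the \emph{same} side, and only then does merging give a cut of $\K^*$ with edge set $D^*\oplus F^*$, whose dual is the cycle $D\oplus F$. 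This also exposes a second inconsistency in your plan: with $E' = D^*$, your formula $D = (E')^*\oplus F$ recovers $D\oplus F$ rather than $D$, so the chain your backward map produces is not the one the theorem's $D\leftrightarrow D^*$ correspondence requires. Your forward-direction crux observation already contains the right idea — XOR with $F^*$ to land on $D^*$ exactly — you just need to propagate it consistently into the backward direction. Your remark about passing from arbitrary to minimal cuts when invoking cycle/cut duality is correct and worth keeping; the paper leaves this implicit.
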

\begin{proof}
We show a one-to-one correspondence between $d$-chains with boundary $C$ in $\K$ and $(v^-_\infty, v^+_\infty)$-cuts in $H$ that preserves the cost.  
Let $D$ be a $d$-chain with $\partial D = C$.  Since $C$ is null-homologous in $\shell(\K)$, there exists $F\subseteq \shell(\K)$ such that $\partial F = C$.
It follows that $\partial(D \oplus F) = \partial D \oplus \partial F = 0$, that is $D\oplus F$ is a $d$-cycle.  
Thus, by cylce/cut duality $D^* \oplus F^*$ is a cut in $\K^*$ that partitions the vertices into two sets $X$ and $Y$. 
Assume, without loss of generality, that $v_\infty \in X$, and note that by splitting $v_\infty$ we obtain a $(X\setminus\{v_\infty\}\cup\{v^-_\infty, v^+_\infty\}, Y)$-cut.  Hence, $D^* \oplus F^*$ is a cut in both $\K^*$ and $H$.

We show that any simple $(v^-_\infty, v^+_\infty)$-path of $H$ crosses $D^*$, therefore $D^*$ is a $(v^-_\infty, v^+_\infty)$-cut.
Let $\beta = (v^-_\infty = v_0, v_1, \ldots, v_k = v^+_\infty)$ be a simple $(v^-_\infty, v^+_\infty)$-path in $H$.
Let $\alpha$ be the closed simple cycle in $\K^*$ obtained by identifying $v_0$ and $v_k$ in $\beta$.
Since $\alpha$ is a closed cycle and $D^* \oplus F^*$ is a cut in $\K^*$, $\alpha$ crosses $D^* \oplus F^*$ an even number of times. 
Therefore, $\beta$ crosses $D^* \oplus F^*$ in $H$ an even number of times; as each edge of $\beta$ is in $D^* \oplus F^*$ in $H$ if and only if the corresponding edge of it in $\alpha$ is in $D^* \oplus F^*$ in $\K^*$.
On the other hand, $v_0 = v^-_\infty$ is only incident to edges from $F^*$.  In particular, $(v_0, v_1) \in F^*$.  If $(v_0, v_1) \in D^*$, then $\beta$ crosses $D^*$ and so the statement holds.  Otherwise, if $(v_0, v_1) \notin D^*$, then the path $(v_1, \ldots, v_k)$ must cross $D^* \oplus F^*$ at least once.  Since all $F$-edges are incident to $v_0$ and $\beta$ is simple we have $v_i \neq v_0$ for any $0<i\leq k$.  Therefore, $(v_1, \ldots, v_k)$ must cross $D^*$ and so the statement holds. 

Conversely, let $D^*$ be a $(v^-_\infty, v^+_\infty)$-cut in $H$.  Since $F^*$ is composed of all edges incident to $v^-_\infty$, it is a $(v^-_\infty, v^+_\infty)$-cut, too. 
It follows that $D^* \oplus F^*$ is a cut in $H$ that has $v^-_\infty$ and $v^+_\infty$ on the same side by Lemma~\ref{lem:two_cuts}.
Therefore, $D^* \oplus F^*$ is a cut in $\K^*$; obtained after identifying $v^-_\infty$ and $v^+_\infty$.  Now, by cycle/cut duality, $D\oplus F$ is a cycle in $\K$, that is $\partial(D\oplus F) = 0$.  As $\partial F = C$, we have $\partial D = C$ and the proof is complete. 
\end{proof}

Now we compute the runtime of the presented algorithm.
The time required to perform the minimum cut computation dominates the preprocessing we perform on the dual graph.
A minimum $(s, t)$-cut in a graph with $n$ vertices and $m$ edges can be computed in $O(nm)$ time via the maximum flow algorithm of Orlin \cite{Orlin2013}.
If $\K$ has $m$ facets then the dual graph $\K^*$ will have $m$ edges.
The number of vertices in $H$ is equal to $\beta_d + 2$. Hence, we can compute the cut in $O(\beta_d m)$ time.

\paragraph{Acknowledgements}
This material is based upon work
supported by the National Science Foundation under Grant Nos.\
CCF-1617951 and CCF-1816442.

\bibliographystyle{plain}
\bibliography{minchain}
\appendix

\end{document}